 \newcommand{\F}{\mathcal{F}}
\newcommand{\I}{\mathcal{I}} 
 \renewcommand{\L}{\mathcal{L}}
 \renewcommand{\P}{\mathcal{P}}
 \newcommand{\X}{\mathcal{X}}
\newcommand{\LTL}{{\sc ltl}\xspace}
\newcommand{\LTLf}{{\sc ltl}$_f$\xspace}
\newcommand{\ltlf}{{\sc ltl}$_f$\xspace}
\newcommand{\LDLf}{{\sc ldl}$_f$\xspace}
\newcommand{\PLTL}{{\sc pltl}\xspace}
\newcommand{\PLTLf}{{\sc pltl}$_f$\xspace}
\newcommand{\FOL}{{\sc fol}\xspace}
\newcommand{\MSO}{{\sc mso}\xspace}
\newcommand{\NNF}{{\sc nnf}\xspace}
\newcommand{\BNF}{{\sc bnf}\xspace}
\newcommand{\Nat}{{\rm I\kern-.23em N}}
\def\tt{\top}
\def\X{\mathcal{X}}
\def\F{\mathcal{F}}
\def\pltlf{\PLTLf}
\newcommand{\MONA}{$\mathsf{MONA}$\xspace}
\newcommand{\inlineitem}[1][]{%
\ifnum\enit@type=\tw@
    {\descriptionlabel{#1}}
  \hspace{\labelsep}
\else
  \ifnum\enit@type=\z@
       \refstepcounter{\@listctr}\fi
    \quad\@itemlabel\hspace{\labelsep}
\fi}
\begin{document}

\title{First-Order vs. Second-Order Encodings for \LTLf-to-Automata Translation}

\author{%
Shufang Zhu\inst{1}  \and
Geguang Pu\inst{1}\thanks{Corresponding author} \and
Moshe Y. Vardi\inst{2}
}%

\institute{
East China Normal University, Shanghai, China \and
Rice University, Texas, USA
}

\maketitle

\begin{abstract}

Translating formulas of Linear Temporal Logic~(\LTL) over finite traces, or \LTLf, to symbolic Deterministic Finite Automata~(DFA) plays an important role not only in \LTLf synthesis, but also in synthesis for Safety \LTL formulas. The translation is enabled by using $\mathsf{MONA}$, a powerful tool for symbolic, BDD-based, DFA construction from logic specifications. Recent works used a first-order encoding of \LTLf formulas to translate \LTLf to First Order Logic~(\FOL), which is then fed to \MONA to get the symbolic DFA. This encoding was shown to perform well, but other encodings have not been studied. Specifically, the natural question of whether second-order encoding, which has significantly simpler quantificational structure, can outperform first-order encoding remained open. 

In this paper we address this challenge and study second-order encodings for \LTLf formulas. We first introduce a specific \MSO encoding that captures the semantics of \LTLf in a natural way and prove its correctness. We then explore is a \emph{Compact} \MSO encoding, which benefits from automata-theoretic minimization, thus suggesting a possible practical advantage. To that end, we propose a formalization of symbolic DFA in second-order logic, thus developing a novel connection between BDDs and \MSO. We then show by empirical evaluations that the first-order encoding does perform better than both second-order encodings. The conclusion is that first-order encoding is a better choice than second-order encoding in \LTLf-to-Automata translation.

\end{abstract}

\section{Introduction}\label{sec:intro}
Synthesis from temporal specifications~\cite{PR89a} is a fundamental problem in Artificial Intelligence and Computer Science~\cite{DegVa15}. A popular specification is Linear Temporal Logic (\LTL)~\cite{Pnu77}. The standard approach to solving \LTL synthesis requires, however, determinization of automata on \emph{infinite} words and solving \emph{parity games}, both challenging algorithmic problems~\cite{KV05c}. Thus a major barrier of temporal synthesis has been algorithmic difficulty. One approach to combating this difficulty is to focus on using fragments of \LTL, such as the GR(1) fragment, for which temporal synthesis has lower computational complexity~\cite{BloemGJPPW07}.

A new logic for temporal synthesis, called \LTLf, was proposed recently in~\cite{DV13,DegVa15}. The focus there is not on limiting the syntax of \LTL, but on interpreting it semantically on \emph{finite} traces, rather than \emph{infinite} traces as in~\cite{Pnu77}. Such interpretation allows the executions being arbitrarily long, but not infinite, and is adequate for finite-horizon planning problems. While limiting the semantics to finite traces does not change the computational complexity of temporal synthesis~($\mathsf{2EXPTIME}$), the algorithms for \LTLf are much simpler. The reason is that those algorithms require determinization of automata on \emph{finite} words~(rather than \emph{infinite} words), and solving \emph{reachability} games~(rather than \emph{parity} games)~\cite{DegVa15}.  Another application, as shown in~\cite{ZhuTLPV17}, is that temporal synthesis of \emph{Safety} \LTL formulas, a syntactic fragment of \LTL expressing \emph{safety properties}, can be reduced to reasoning about finite words (see also~\cite{KupfermanVa01,LichtensteinPZ85}). This approach has been implemented in~\cite{ZTLPV17} for \LTLf synthesis and in~\cite{ZhuTLPV17} for synthesis of \emph{Safety} \LTL formulas, and has been shown to outperform existing temporal-synthesis tools such as $\mathsf{Acacia+}$~\cite{BohyBFJR12}. 

The key algorithmic building block in these approaches is a translation of \LTLf to \emph{symbolic} Deterministic Finite Automata~(DFA) \cite{ZhuTLPV17,ZTLPV17}. In fact, translating \LTLf formula to DFA has other algorithmic applications as well. For example, in dealing with safety properties, which are arguably the most used temporal specifications in real-world systems~\cite{KupfermanVa01}. As shown in~\cite{RozierV12}, model checking of safety properties can benefit from using deterministic rather than nondeterminisic automata. Moreover, in runtime verification for safety properties, we need to generate monitors, a type of which are, in essence, deterministic automata~\cite{TabakovRV12}. In~\cite{RozierV12,TabakovRV12}, the translation to deterministic automata is explicit, but symbolic DFAs can be useful also in model checking and monitor generation, because they can be much more compact than explicit DFAs, cf.~\cite{ZTLPV17}.

The method used in~\cite{ZhuTLPV17,ZTLPV17} for the translation of \LTLf to symbolic DFA used an encoding of \LTLf to First-Order Logic~(\FOL) that captures directly the semantics of temporal connectives, and $\mathsf{MONA}$~\cite{KlaEtAl:Mona}, a powerful tool, for symbolic DFA construction from logical specifications. This approach was shown to outperform explicit tools such as $\mathsf{SPOT}$~\cite{spot}, but encodings other than the first-order one have not yet been studied. 
% However, the \FOL encoding relies on a syntax-driven translation that involves an arbitrary alternation of quantifiers, which limits the potential for optimization. 
This leads us here to study second-order translations of \LTLf, where we use Monadic Second Order~(\MSO) logic of one successor over finite words~(called {M2L-STR} in~\cite{KlarlundMS00}). Indeed, one possible advantage of using \MSO is the simpler quantificational structure that the second-order encoding requires, which is a sequence of existential monadic second-order quantifiers followed by a single universal first-order quantifier. Moreover, instead of the syntax-driven translation of first-order encoding of \LTLf to \FOL, the second-order encoding employs a semantics-driven translation, which allows more space for optimization. The natural question arises whether second-order encoding outperforms first-order encoding. 

To answer this question, we study here second-order encodings of \LTLf formulas. We start by introducing a specific second-order encoding called \MSO encoding that relies on having a second-order variable for each temporal operator appearing in the \LTLf formula and proving the correctness. Such \MSO encoding captures the semantics of \LTLf in a natural way and is linear in the size of the formula. 
% We then show that this encoding allows more potential for optimization than the first-order encoding, described in~\cite{ZhuTLPV17}. 
% In particular, we study the following optimizations introduced in~\cite{PSV02,PV03}: in the variable form, where a \emph{Lean} encoding introduces less variables than the standard \emph{Full} encoding; and in the constraint form, where a \emph{Sloppy} encoding allows less tight constraints than the standard \emph{Fussy} encoding. These optimizations of \MSO encoding, however, are only subject to syntactic minimization, such observation leads us to pursue a more compact second-order encoding. 
We then introduce a so called \emph{Compact} \MSO encoding, which captures the tight connection between \LTLf and DFAs. We leverage the fact that while the translation from \LTLf to DFA is doubly exponential~\cite{KupfermanVa01}, there is an exponential translation from \emph{Past} \LTLf to DFA (a consequence of~\cite{CKS81,DV13}). Given an \LTLf formula $\phi$, we first construct a DFA that accepts exactly the reverse language satisfying $models(\phi)$ via \emph{Past} \LTLf. We then encode this DFA using second-order logic and ``invert'' it to get a second-order formulation for the original \LTLf formula. Applying this approach directly, however, would yield an \MSO formula with an exponential (in terms of the original \LTLf formula) number of quantified monadic predicates. To get a more compact formulation we can benefit from the fact that the DFA obtained by $\mathsf{MONA}$ from the \emph{Past} \LTLf formula is symbolic, expressed by binary decision diagrams~(BDDs)~\cite{Akers78}. We show how we can obtain a Compact \MSO encoding directly from these BDDs. In addition, we present in this paper the first evaluation of the spectrum of encodings for \LTLf-to-automata from first-order to second-order. 

% empirical evalutaions, the surprising results and conclusions. 
% In spite of our efforts to get efficient second-order translations from \LTLf to symbolic DFA, our empirical evaluation shows the superiority of the first-order encoding as a way to get $\mathsf{MONA}$ to generate a symbolic DFA, which surprisingly violates the intuition of \MSO encoding. 

To perform an empirical evaluation of the comparison between first-order encoding and second-order encoding of \LTLf, we first provide a broad investigation of different optimizations of both encodings. Due to the syntax-driven translation of \FOL encoding, there is limit potential for optimization such that we are only able to apply different normal forms to \LTLf formulas, which are Boolean Normal Form~(\BNF) and Negation Normal Form~(\NNF). The semantics-driven translation of second-order encoding, however, enables more potential for optimization than the \FOL encoding. In particular, we study the following optimizations introduced in~\cite{PSV02,PV03}: in the variable form, where a \emph{Lean} encoding introduces fewer variables than the standard \emph{Full} encoding; and in the constraint form, where a \emph{Sloppy} encoding allows less tight constraints than the standard \emph{Fussy} encoding. The main result of our empirical evaluations is the superiority of the first-order encoding as a way to get $\mathsf{MONA}$ to generate a symbolic DFA, which answers the question of whether second-order outperforms first-order for \LTLf-to-automata translation.

The paper is organized as follows. In Section~\ref{sec:pre} we provide preliminaries and notations. Section~\ref{sec:mso} introduces \MSO encoding and proves the correctness. Section~\ref{sec:omso} describes a more compact second-order encoding, called Compact \MSO encoding and proves the correctness. Empirical evaluation results of different encodings and different optimizations are presented in Section~\ref{sec:exp}. Finally, Section~\ref{sec:con} offers concluding remarks.
% For details, we refer to appendix.

% \newpage
\section{Preliminaries}\label{sec:pre}
\subsection{\LTLf Basics}
Linear Temporal Logic over \emph{finite traces} (\ltlf) has the same syntax as \LTL~\cite{DV13}. Given a set $\P$ of propositions, the syntax of \ltlf formulas is as follows:\\
\centerline{
$\phi ::= \top\ |\ \bot\ |\ p\ |\ \neg \phi\ |\ \phi_1\wedge\phi_2\ |\ X\phi\ |\ \phi_1 U \phi_2$}
where $p\in\P$.
We use $\top$ and $\bot$ to denote \textit{true} and \textit{false} respectively. $X$ (Next) and $U$ (Until)
are temporal operators, whose dual operators are $N$ (Weak Next) and $R$ (Release) respectively, defined as 
$N \phi\equiv \neg X\neg \phi$ and $\phi_1 R\phi_2\equiv \neg (\neg\phi_1 U\neg \phi_2)$. The abbreviations $\mbox{(Eventually)}~F\phi\equiv \top U\phi$ and $\mbox{(Globally)}~G\phi\equiv \bot R\phi$ are defined as usual. Finally, we have standard boolean abbreviations, such as $\vee$ (or) and $\rightarrow$ (implies). 

Elements $p\in \P$ are \emph{atom}s. A literal $l$ can be an atom or the negation of an atom. A \textit{trace} $\rho = \rho[0],\rho[1],\ldots$ is a sequence of propositional assignments, where $\rho[x]\in 2^\P$ ($x \geq 0$) is the $x$-th point of $\rho$. Intuitively, $\rho[x]$ is the set of propositions that are $true$ at instant $x$. Additionally, $|\rho|$ represents the length of $\rho$.
The trace $\rho$ is an \textit{infinite} trace if $|\rho| = \infty$ and $\rho\in (2^\P)^{\omega}$; 
otherwise $\rho$ is \textit{finite}, and $\rho\in (2^\P)^{*}$. 
\ltlf formulas are interpreted over finite traces. Given a finite trace $\rho$ and an \ltlf formula 
$\phi$, we inductively define when $\phi$ is $true$ for $\rho$ at point $x$ ($0 \leq x < |\rho|$), written $\rho, x \models \phi$, as follows:

\begin{itemize}
  \item $\rho, x \models\top$ and $\rho, x \not\models\bot$; 
  \item $\rho, x \models p$ iff $p \in \rho[x]$;
  \item $\rho, x \models \neg \phi$ iff $\rho,x \not\models \phi$; 
  \item $\rho, x \models\phi_1 \wedge \phi_2$, iff $\rho,x \models \phi_1$ and $\rho, x \models \phi_2$;
  \item $\rho, x \models X\phi$, iff $x+1 < |\rho|$ and $\rho, x+1 \models \phi$;
  \item $\rho, x \models \phi_1 U \phi_2$, iff there exists $y$ such that $x\leq y < |\rho|$ and $\rho, y\models \phi_2$, and for all $z$, $x \leq z < y$, we have $\rho, z \models \phi_1$.
\end{itemize}

An \ltlf formula $\phi$ is $true$ in $\rho$, denoted by $\rho \models \phi$, when $\rho, 0\models\phi$. Every \LTLf formula can be written in Boolean Normal Form (\BNF) or Negation Normal Form (\NNF)~\cite{RozierV11}. \BNF rewrites the input formula using only $\neg$, $\wedge$, $\vee$, $X$, and $U$. \NNF pushes negations inwards, introducing the dual temporal operators $N$ and $R$, until negation is applied only to atoms.

\subsection {Symbolic DFA and $\mathsf{MONA}$}~\label{SDFA}

We start by defining the concept of \emph{symbolic automaton}~\cite{ZTLPV17}, where a boolean formula is used to represent the transition function of a Deterministic Finite Automaton~(DFA). A symbolic deterministic finite automaton (Symbolic DFA)  $\mathcal{F} = (\P, \X, X_0, \eta, f)$ corresponding to an explicit DFA $\mathcal{D} = (2^\P, S, s_0, \delta, F)$ is defined as follows:
\begin{itemize}
\item $\P$ is the set of atoms; 
\item $\mathcal{X}$ is a set of state variables where $|\X| = \lceil \log_2|S| \rceil$;
\item $X_0 \in 2^\X$ is the initial state corresponding to $s_0$;
\item $\eta : 2^\X\times 2^\P \rightarrow 2^\X$ is a boolean transition function corresponding to $\delta$;
\item $f$ is the acceptance condition expressed as a boolean formula over $\X$ such that $f$ is satisfied by an assignment $X$ iff $X$ corresponds to a final state $s \in F$.
\end{itemize}

We can represent the symbolic transition function $\eta$ by an indexed family $\eta_q : 2^\X \times 2^\P \rightarrow \{0,1\}$ for $x_q \in \X$, which means  that $\eta_q$ can be represented by a binary decision diagram~(BDD)~\cite{Akers78} over $\X \cup \P$. Therefore, the symbolic DFA can be represented by a sequence of BDDs, each of which corresponding to a state variable. 

The $\mathsf{MONA}$ tool~\cite{KlaEtAl:Mona} is an efficient implementation for translating \FOL and \MSO formulas over finite words into minimized symbolic deterministic automata. $\mathsf{MONA}$ represents symbolic deterministic automata by means of \emph{Shared Multi-terminal BDDs} (ShMTBDDs)~\cite{Bryant92,BieKlaRau96}. The symbolic \LTLf synthesis framework of \cite{ZTLPV17} requires standard BDD representation by means of symbolic DFAs as defined above. The transformation from ShMTBDD to BDD is described in~\cite{ZTLPV17}.

\subsection{\FOL Encoding of \LTLf}\label{sec:fol}
First Order Logic (\FOL) encoding of \LTLf translates \LTLf into \FOL over finite linear order with monadic predicates. In this paper, we utilize the \FOL encoding proposed in~\cite{DV13}. We first restrict our interest to \emph{monadic structure}. Consider a finite trace $\rho = \rho[0]\rho[1]\dotsb\rho[e]$, the corresponding \emph{monadic structure} $\I_\rho= (\Delta^\I,<, \cdot^{\mathcal{I}})$ describes $\rho$ as follows. 
$\Delta^\I = \{0, 1, 2,\dotsb, last\}$, where $last = e$ indicating the last point along the trace. The linear order~$<$ is defined over $\Delta^\I$ in the standard way~\cite{KlarlundMS00}. The notation $\cdot^{\mathcal{I}}$ indicates the set of monadic predicates that describe the atoms of $\P$, where the interpretation of each $p \in \P$ is $Q_p = \{x~:~p \in \rho[x]\}$. Intuitively, $Q_p$ is interpreted as the set of positions where $p$ is true in $\rho$. 
In the translation below, 
% we overload notation here and consider $\P$ both as a set of atomic propositions in the context of \LTLf and as a set of monadic predicate symbols in the context of \FOL. 
$\mathsf{fol}(\theta,x)$, where $\theta$ is an \LTLf formula and $x$ is a variable, is an \FOL formula asserting the truth of $\theta$ at point $x$ of the linear order. The translation uses the successor function $+1$, and the variable $last$ that represents the maximal point in the linear order.
\begin{itemize}
\item $\mathsf{fol}(p, x) = (Q_p(x))$
\item $\mathsf{fol}(\neg \phi, x) = (\neg \mathsf{fol}(\phi,x))$
\item $\mathsf{fol}(\phi_1 \wedge \phi_2, x) = (\mathsf{fol}(\phi_1, x) \wedge \mathsf{fol}(\phi_2, x))$
\item $\mathsf{fol}(\phi_1 \vee \phi_2, x) = (\mathsf{fol}(\phi_1, x) \vee \mathsf{fol}(\phi_2, x))$
\item $\mathsf{fol}(X\phi, x) = ((\exists y)((y = x+1) \wedge \mathsf{fol}(\phi, y)))$
\item $\mathsf{fol}(N\phi, x) = ((x = last) \vee ((\exists y)((y = x+1) \wedge \mathsf{fol}(\phi, y))))$
\item $\mathsf{fol}(\phi_1 U \phi_2, x) = ((\exists y)((x\leq y\leq last) \wedge \mathsf{fol}(\phi_2, y) \wedge (\forall z)((x\leq z < y) \rightarrow \mathsf{fol}(\phi_1,z))))$
\item $\mathsf{fol}(\phi_1 R \phi_2, x) = (((\exists y)((x\leq y\leq last) \wedge \mathsf{fol}(\phi_1, y) \wedge (\forall z)((x\leq z \leq y) \rightarrow \mathsf{fol}(\phi_2,z)))) \vee ((\forall z)((x \leq z \leq last) \rightarrow \mathsf{fol}(\phi_2, z))))$
\end{itemize}
For \FOL variables, $\mathsf{MONA}$ provides a built-in operator $+1$ for successor computation. Moreover, we can use built-in procedures in $\mathsf{MONA}$ to represent the variable \emph{last}. 
% In particular, the standard decision procedure for {\sc M2L-STR\xspace} in $\mathsf{MONA}$ offers a single extra basic operation \emph{allpos}, which allows declaring a second-order variable $\mathsf{ALIVE}$ that collects all instances along the trace. With $\mathsf{ALIVE}$, we are able to use the built-in operator \emph{max}, referring to the maximal element of $\mathsf{ALIVE}$, to represent the variable \emph{last}, the last instance along the trace.
Given a finite trace $\rho$, we denote  the corresponding finite linear ordered \FOL interpretation  of $\rho$ by  $\mathcal{I}_{\rho}$. The following theorem guarantees the correctness of \FOL encoding of \LTLf.

\begin{theorem}[\cite{Kam68}]\label{thm:fol}
Let $\phi$ be an \LTLf formula and $\rho$ be a finite trace. Then $\rho \models \phi$ iff $\mathcal{I}_{\rho} \models \mathsf{fol}(\phi, 0)$.
\end{theorem}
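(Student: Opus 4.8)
The plan is to prove a stronger, position-parametric statement by structural induction on $\phi$, and then recover the theorem as the special case $x = 0$. The strengthening is essential because the temporal operators $X$, $N$, $U$, and $R$ shift the point at which subformulas are evaluated, so a claim pinned to point $0$ would not yield a usable induction hypothesis. Concretely, I would show: for every \LTLf formula $\phi$ and every position $x$ with $0 \leq x < |\rho|$ (equivalently $x \in \Delta^\I$), we have $\rho, x \models \phi$ if and only if $\mathcal{I}_\rho \models \mathsf{fol}(\phi, x)$, where the free first-order variable of $\mathsf{fol}(\phi, x)$ is interpreted as the element $x$ of $\Delta^\I$. Since $\rho \models \phi$ abbreviates $\rho, 0 \models \phi$, Theorem~\ref{thm:fol} is exactly the instance at $x = 0$.

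For the base cases, $\top$ and $\bot$ are immediate, and for an atom $p$ the chain $\rho, x \models p$ iff $p \in \rho[x]$ iff $x \in Q_p$ iff $\mathcal{I}_\rho \models Q_p(x)$ follows directly from the definition of the monadic structure $\I_\rho$, in particular from $Q_p = \{x : p \in \rho[x]\}$. The Boolean cases $\neg$, $\wedge$, and $\vee$ are routine: each \FOL clause is the exact translation of the corresponding propositional connective, so the equivalence transfers immediately from the induction hypothesis applied to the subformula(s) at the same point $x$.

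The temporal cases are where the finite-trace semantics must be handled with care, and this is the main obstacle I expect. For $X\phi$, the \FOL clause is $(\exists y)((y = x+1) \wedge \mathsf{fol}(\phi, y))$; the crucial observation is that $y$ ranges over $\Delta^\I = \{0, \dots, last\}$ with $last = |\rho| - 1$, so a witness $y = x+1$ exists in the domain exactly when $x+1 \leq last$, i.e. when $x + 1 < |\rho|$. Thus the existential quantifier itself encodes the ``next point exists'' requirement of the finite-trace $X$ semantics, and the induction hypothesis applied to $\phi$ at $y = x+1$ closes the equivalence. The dual $N\phi$ is handled symmetrically, with the extra disjunct $x = last$ capturing the weak-next behavior at the end of the trace. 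For $\phi_1 U \phi_2$, the equivalence between the trace condition (there is $y$ with $x \leq y < |\rho|$ satisfying $\phi_2$, and $\phi_1$ holds at all $z$ with $x \leq z < y$) and the given \FOL clause follows by identifying $y < |\rho|$ with $y \leq last$ and applying the induction hypothesis to $\phi_1$ and $\phi_2$ at the quantified points; the release case $\phi_1 R \phi_2$ proceeds analogously from its clause. Throughout, the only genuine subtlety is the bookkeeping identification of $y < |\rho|$ with $y \leq last$ together with the recognition that bounded quantification over $\Delta^\I$ faithfully mirrors the boundedness of the finite trace; once this is made explicit, each temporal case reduces to a direct unfolding of the two semantic definitions.
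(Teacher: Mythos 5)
The paper does not actually prove this statement: it is quoted from Kamp's thesis \cite{Kam68} and used as a black box, so there is no in-paper proof to compare yours against line by line. On its own merits, your proposal is correct and is the standard argument. The position-parametric strengthening ($\rho,x\models\phi$ iff $\I_\rho\models\mathsf{fol}(\phi,x)$ for all $x$ in the domain, not just $x=0$) is exactly the right move, and your observation that the bounded domain $\Delta^\I=\{0,\dots,last\}$ makes the existential quantifier in the $X$-clause automatically enforce the ``next point exists'' side condition is the one genuinely finite-trace-specific point that needs to be said. The closest thing the paper has to a proof of this kind is its argument for the past-time analogue (Theorem~\ref{thm:pltlf2fol}), and it is instructive that the route taken there is different and weaker than yours: rather than strengthening to an arbitrary evaluation position, that proof fixes the position at $last$ and handles $Y$ and $S$ by passing to prefixes $\rho'$ of $\rho$ with their own interpretations $\I_\rho'$, which forces awkward bookkeeping about how the monadic predicates of $\I_\rho'$ relate to those of $\I_\rho$ and leaves the induction hypothesis applied at shifted indices like $last-1$ and $last-y$ without a clean statement to instantiate. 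Your single-structure, all-positions formulation avoids that entirely and is the cleaner of the two. The only place you are slightly terse is the $R$ case: since the paper defines $\phi_1 R\phi_2$ as $\neg(\neg\phi_1 U\neg\phi_2)$, matching the two-disjunct \FOL clause for $R$ to the semantics requires one extra step of classical reasoning (negating the $U$ unfolding and splitting on whether a witness for $\phi_1$ exists), which you should spell out rather than call ``analogous,'' but this is routine and does not affect correctness.
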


% \newpage
\section{\MSO Encoding}\label{sec:mso}
First-order encoding was shown to perform well in the context of \LTLf-to-automata translation~\cite{ZhuTLPV17}, but other encodings have not been studied. Specifically, the natural question of whether second-order~(\MSO) outperforms first-order in the same context remained open. \MSO is an extension of \FOL that allows quantification over monadic predicates~\cite{KlarlundMS00}. By applying a semantics-driven translation to \LTLf, we obtain an \MSO encoding that has significantly simpler quantificational structure. 
% One immediate drawback of the \FOL encoding is that it involves an arbitrary alternation of quantifiers due to the syntax-driven translation. This leads to limited optimization space in the sense that simplifying the translation such that benefiting subsequent symbolic DFA construction.
% By applying a semantics-driven translation to \LTLf, we obtain an \MSO encoding that allows more optimization space. 
This encoding essentially captures in \MSO the standard encoding of temporal connectives, cf.~\cite{BCMDH92}.
Intuitively speaking, \MSO encoding deals with \LTLf formula by interpreting every operator with corresponding subformulas following the semantics of the operator. 
We now present \MSO encoding that translates \LTLf formula $\phi$ to \MSO, which is then fed to $\mathsf{MONA}$ to produce a symbolic DFA. 

% We first show the basic \MSO encoding in Section~\ref{sec:basicmso}, then describe variations of it in the following.
For an \LTLf formula $\phi$ over a set $\mathcal{P}$ of atoms, let $cl(\phi)$ denote the set of subformulas of $\phi$. We define atomic formulas as atoms $p \in \P$.
% , and we denote non-atomic subformulas in $cl(\phi)$ by $\theta_i$. 
For every subformula in $cl(\phi)$ we introduce monadic predicate symbols as follows: for each atomic subformula $p \in \P$, we have a monadic predicate symbol $Q_p$; for each non-atomic subformula $\theta_i \in \{\theta_1, \ldots, \theta_m\}$, we have $Q_{\theta_i}$. 
Intuitively speaking, each monadic predicate indicates the positions where the corresponding subformula is true along the linear order.

Let $\mathsf{mso}(\phi)$ be the translation function that given an \LTLf formula $\phi$ returns a corresponding \MSO formula asserting the truth of $\phi$ at position 0. We define $\mathsf{mso}(\phi)$ as following:
$\mathsf{mso}(\phi) = (\exists Q_{\theta_1}) \cdots (\exists Q_{\theta_m})(Q_{\phi}(0) \wedge (\forall x)(\bigwedge_{i=1}^m \mathsf{t}(\theta_i,x))$, where $x$ indicates the position along the finite linear order. Here $\mathsf{t}(\theta_i,x)$ asserts that the truth of every non-atomic subformula $\theta_i$ of $\phi$ at position $x$ relies on the truth of corresponding subformulas at $x$ such that following the semantics of \LTLf. Therefore, $\mathsf{t}(\theta_i,x)$ is defined as follows:
\begin{itemize}
\item If $\theta_i = (\neg\theta_j)$, then $\mathsf{t}(\theta_i,x) = (Q_{\theta_i}(x) \leftrightarrow \neg Q_{\theta_j}(x))$

\item If $\theta_i = (\theta_j \wedge \theta_k)$, then $\mathsf{t}(\theta_i,x) = (Q_{\theta_i}(x) \leftrightarrow (Q_{\theta_j}(x) \wedge Q_{\theta_k}(x)))$
\item If $\theta_i = (\theta_j \vee \theta_k)$, then $\mathsf{t}(\theta_i,x) = (Q_{\theta_i}(x) \leftrightarrow (Q_{\theta_j}(x) \vee Q_{\theta_k}(x)))$
\item If $\theta_i = (X\theta_j)$, then $\mathsf{t}(\theta_i,x) = (Q_{\theta_i}(x)\leftrightarrow ((x \neq last)\wedge Q_{\theta_j}(x+1)))$
\item If $\theta_i = (N\theta_j)$, then $\mathsf{t}(\theta_i,x) = (Q_{\theta_i}(x)\leftrightarrow ((x = last)\vee Q_{\theta_j}(x+1)))$
\item If $\theta_i = (\theta_j U \theta_k)$, then $\mathsf{t}(\theta_i,x) = (Q_{\theta_i}(x) \leftrightarrow (Q_{\theta_k}(x) \vee ((x \neq last) \wedge Q_{\theta_j}(x) \wedge Q_{\theta_i}(x+1))))$
\item If $\theta_i = (\theta_j R \theta_k$), then $\mathsf{t}(\theta_i,x) = (Q_{\theta_i}(x) \leftrightarrow (Q_{\theta_k}(x) \wedge ((x = last) \vee Q_{\theta_j}(x) \vee Q_{\theta_i}(x+1))))$
\end{itemize}
% In the translation above, the successor function $+1$ and the variable $last$ can be defined in terms of $<$.
Consider a finite trace $\rho$, the corresponding interpretation $\I_\rho$ of $\rho$ is defined as in Section~\ref{sec:fol}.
The following theorem asserts the correctness of the \MSO encoding. 
% For proof, we refer to Appendix~\ref{app::mso}.

\begin{theorem}\label{thm:msofussy}

Let $\phi$ be an \LTLf formula, $\rho$ be a finite trace. Then $\rho \models \phi$ iff $\I_\rho\models \mathsf{mso}(\phi)$.

\end{theorem}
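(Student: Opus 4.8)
The plan is to prove both directions of the biconditional by exploiting the characterization that, in the \MSO encoding, the existentially quantified predicates $Q_{\theta_i}$ are forced (by the universally quantified conjunction of consistency constraints $\mathsf{t}(\theta_i,x)$) to coincide exactly with the ``truth sets'' of the corresponding subformulas. Concretely, for a fixed finite trace $\rho$ with interpretation $\I_\rho$, I define the \emph{canonical interpretation} of each second-order variable by $Q_{\theta_i}^{\mathrm{can}} = \{x : 0 \le x < |\rho| \text{ and } \rho, x \models \theta_i\}$, and similarly $Q_p^{\mathrm{can}} = \{x : p \in \rho[x]\}$ for atoms (the latter is already fixed by $\I_\rho$). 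The heart of the argument is a lemma stating that this canonical assignment is the \emph{unique} interpretation of the $Q_{\theta_i}$ that satisfies $(\forall x)\bigwedge_{i=1}^m \mathsf{t}(\theta_i,x)$.

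First I would prove the backward part of that lemma: the canonical assignment satisfies every constraint $\mathsf{t}(\theta_i,x)$ at every point $x$. This is a case analysis over the top connective of $\theta_i$, where in each case one checks that the biconditional $Q_{\theta_i}(x) \leftrightarrow (\ldots)$ holds when all $Q$ predicates are read as truth sets. For the Boolean cases this is immediate from the \LTLf semantics of $\neg,\wedge,\vee$. For $X$ and $N$ it follows from the clauses $\rho,x\models X\theta_j$ iff $x+1<|\rho|$ and $\rho,x+1\models\theta_j$ (noting $x\neq last$ is exactly $x+1<|\rho|$). The two fixpoint cases $U$ and $R$ require the standard one-step unfolding identities: $\phi_1 U \phi_2 \equiv \phi_2 \vee (\phi_1 \wedge X(\phi_1 U \phi_2))$ and dually for $R$, again translating $X\psi$ into $(x\neq last)\wedge Q_\psi(x+1)$. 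These unfolding identities are semantic consequences of the \LTLf clauses and are where I must be careful about the endpoint $x=last$.

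The key step, and the one I expect to be the main obstacle, is \emph{uniqueness} of the satisfying assignment for the $U$ and $R$ constraints. The constraints $\mathsf{t}(\theta_i,x)$ are merely local fixpoint equations relating position $x$ to position $x+1$; for Boolean, $X$, and $N$ connectives they determine $Q_{\theta_i}(x)$ directly in terms of predicates of structurally smaller subformulas (so one argues by induction on subformula structure that those predicates are forced). But for $U$ and $R$ the equation for $Q_{\theta_i}(x)$ refers to $Q_{\theta_i}(x+1)$, i.e.\ to the same predicate at a later point. To pin these down I would argue by \emph{downward induction on $x$ from $last$}: at $x=last$ the constraint collapses (the $(x\neq last)$ conjunct vanishes for $U$, forcing $Q_{\theta_i}(last)\leftrightarrow Q_{\theta_k}(last)$, and the $(x=last)$ disjunct forces the $R$ case), and for $x<last$ the value of $Q_{\theta_i}(x)$ is determined by the already-fixed value at $x+1$ together with the forced values of the strict subformulas $\theta_j,\theta_k$. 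Thus a nested induction — outer induction on subformula structure, inner downward induction on the position — forces every $Q_{\theta_i}$ to equal its canonical truth set. The delicate point is confirming that the least-fixpoint semantics of $U$ (and greatest-fixpoint of $R$) genuinely agrees with the \emph{unique} solution of these bounded local equations, which holds precisely because the finite trace has a last point that anchors the recursion.

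With the uniqueness lemma in hand, the theorem follows readily. For the forward direction, if $\rho\models\phi$, i.e.\ $\rho,0\models\phi$, I instantiate each $\exists Q_{\theta_i}$ with its canonical truth set; the lemma's backward part shows all constraints hold, and $Q_\phi^{\mathrm{can}}(0)$ holds because $\rho,0\models\phi$, so $\I_\rho\models\mathsf{mso}(\phi)$. For the converse, if $\I_\rho\models\mathsf{mso}(\phi)$, then there exist interpretations of the $Q_{\theta_i}$ satisfying both $Q_\phi(0)$ and all constraints; by the uniqueness part of the lemma these interpretations must be the canonical truth sets, whence $Q_\phi(0)$ being true means $0$ lies in the truth set of $\phi$, i.e.\ $\rho,0\models\phi$, so $\rho\models\phi$.
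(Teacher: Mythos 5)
Your proposal is correct, and its forward direction coincides with the paper's: instantiate each $\exists Q_{\theta_i}$ with the canonical truth set $\{x : \rho,x\models\theta_i\}$ and verify each constraint $\mathsf{t}(\theta_i,x)$ by a case analysis using the one-step unfolding of $U$ and $R$. Where you diverge is the backward direction. The paper proves only the single implication it needs --- if $x\in Q_{\theta_i}$ then $\rho,x\models\theta_i$ --- by structural induction on the subformula, and in the $U$ and $R$ cases it simply writes ``Thus, $\rho,x\models\phi$'' after unfolding the constraint, even though that unfolding refers to $Q_{\phi}(x+1)$, i.e.\ to the same predicate at a later position; making that step rigorous requires exactly the inner downward induction on position from $last$ that you spell out. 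You instead prove the stronger statement that the canonical assignment is the \emph{unique} solution of $(\forall x)\bigwedge_i \mathsf{t}(\theta_i,x)$, via a nested induction (outer on subformula structure, inner downward on position, anchored by the collapse of the $U$/$R$ constraints at $x=last$), and then read off the converse direction immediately. Your route costs slightly more --- uniqueness is more than is strictly needed, since any satisfying assignment with $0\in Q_\phi$ would suffice provided $Q_\phi\subseteq Q_\phi^{\mathrm{can}}$ --- but it buys a cleaner statement and, more importantly, it explicitly closes the gap in the paper's treatment of the self-referential $U$/$R$ constraints, correctly identifying that the finite trace's last point is what makes the local fixpoint equations have a unique solution (on infinite traces the $\leftrightarrow$ constraints alone would not pin down $Q_{\theta_j U \theta_k}$).
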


\begin{proof}

If $\phi$ is a propositional atom $p$, then $\mathsf{mso}(\phi)=Q_p(0)$. It is true that $\rho \models \phi$ iff $\I_\rho \models \mathsf{mso}(\phi)$. If $\phi$ is an nonatomic formula, we prove this theorem in two directions.

Suppose first that $\rho$ satisfies $\phi$. We expand the monadic structure $\I_\rho$ with interpretations for the existentially quantified monadic predicate symbols by setting $Q_{\theta_i}$, the interpretation of subformula $\theta_i$ in $\I_\rho$, as the set collecting all points of $\rho$ satisfying $\theta_i$, that is $Q_{\theta_i}=\{x~:~\rho,x\models\theta_i\}$. We also have $Q_p=\{x~:~\rho,x\models p\}$ and denote the expanded structure by $\I_\rho^{mso}$. By assumption, $Q_\phi(0)$ holds in $\I_\rho^{mso}$.
It remains to prove that $\I_\rho^{mso}\models \forall x.\mathsf{t}(\theta_i,x)$, for each nonatomic subformula $\theta_i\in cl(\phi)$, which we prove via structural induction over $\theta_i$.
\begin{itemize}
% \item 
% Claim holds trivially for atomic formulas.
\item 
If $\theta_i =(\neg \theta_j$), then $\mathsf{t}(\theta_i,x) = (Q_{\theta_i}(x) \leftrightarrow (\neg Q_{\theta_j}(x)))$. This holds, since $Q_{(\neg\theta_j)}=\{x~:~\rho,x\not\models\theta_j\}$ and $Q_{\theta_j}=\{x~:~\rho,x\models\theta_j\}$.
\item
If $\theta_i = (\theta_j \wedge \theta_k)$, then $\mathsf{t}(\theta_i,x) = (Q_{\theta_i}(x) \leftrightarrow (Q_{\theta_j}(x) \wedge Q_{\theta_k}(x)))$. This holds, since $Q_{(\theta_j\wedge\theta_k)}=\{x~:~\rho,x\models\theta_j\mbox{ and }\rho,x\models\theta_k\}$, $Q_{\theta_j}=\{x~:~\rho,x\models\theta_j\}$ and $Q_{\theta_k}=\{x~:~\rho,x\models\theta_k\}$.
\item
If $\theta_i = (\theta_j \vee \theta_k)$, then $\mathsf{t}(\theta_i,x) = (Q_{\theta_i}(x) \leftrightarrow (Q_{\theta_j}(x) \vee Q_{\theta_k}(x)))$. This holds, since $Q_{(\theta_j\vee\theta_k)}=\{x~:~\rho,x\models\theta_j\mbox{ or }\rho,x\models\theta_k\}$, $Q_{\theta_j}=\{x~:~\rho,x\models\theta_j\}$ and $Q_{\theta_k}=\{x~:~\rho,x\models\theta_k\}$.
\item
If $\theta_i = (X\theta_j)$, then $\mathsf{t}(\theta_i,x) = (Q_{\theta_i}(x)\leftrightarrow ((x \neq last)\wedge Q_{\theta_j}(x+1)))$.
This holds, since $Q_{(X\theta_j)}=\{x~:~\rho,x\models (X\theta_j)\}=\{x~:~ x \neq last\mbox{ and } \rho,x+1\models\theta_j\}$, and $Q_{\theta_j}=\{x~:~\rho,x\models\theta_j\}$.
\item
If $\theta_i = (N\theta_j)$, then $\mathsf{t}(\theta_i,x) = (Q_{\theta_i}(x)\leftrightarrow ((x = last)\vee Q_{\theta_j}(x+1)))$. This holds, since $Q_{(N\theta_j)}=\{x~:~\rho,x\models (N\theta_j)\}=\{x~:~ x=last\mbox{ or } \rho,x+1\models\theta_j\}$, and $Q_{\theta_j}=\{x~:~\rho,x\models\theta_j\}$.

\item If $\theta_i = (\theta_j U \theta_k)$, then $\mathsf{t}(\theta_i,x) = (Q_{\theta_i}(x) \leftrightarrow (Q_{\theta_k}(x) \vee ((x \neq last) \wedge Q_{\theta_j}(x) \wedge Q_{\theta_i}(x+1))))$. This holds, since $Q_{(\theta_j U \theta_k)} = \{x~:~\rho,x \models \theta_j U \theta_k\} = \{x~:~\rho,x \models \theta_k \mbox{ or } x \neq last \mbox{ with } \rho,x \models \theta_j \mbox{ also } \rho,x+1 \models \theta_i \}$, $Q_{\theta_j} = \{x~:~\rho,x \models \theta_j\}$, and $Q_{\theta_k} = \{x~:~\rho,x \models \theta_k\}$;

\item If $\theta_i = (\theta_j R \theta_k)$, then $\mathsf{t}(\theta_i,x) = (Q_{\theta_i}(x) \leftrightarrow (Q_{\theta_k}(x) \wedge ((x = last) \vee Q_{\theta_j}(x) \vee Q_{\theta_i}(x+1))))$. This holds, since $Q_{(\theta_j R \theta_k)} = \{x~:~\rho,x \models \theta_j R \theta_k\} = \{x~:~\rho,x \models \theta_k \mbox{ with } x = last \mbox{ or } \rho,x \models \theta_j \mbox{ or } \rho,x+1 \models \theta_i \}$, $Q_{\theta_j} = \{x~:~\rho,x \models \theta_j\}$, and $Q_{\theta_k} = \{x~:~\rho,x \models \theta_k\}$.
\end{itemize}
%With above, it has been proven that $\mathcal{I}_{\rho}^{mso} \models mso(\phi)$; 

Assume now that $\I_\rho\models \mathsf{mso}(\phi)$. This means that there is an 
expansion of $\I_\rho$ with monadic interpretations $Q_{\theta_i}$ for each nonatomic subformula $\theta_i\in cl(\phi)$ such that this expanded structure $\I_\rho^{mso}\models (Q_\phi(0)\wedge ((\forall x)\bigwedge_{i=1}^{m}\mathsf{t}(\theta_i,x)))$. We now prove by induction on $\phi$ that if $x\in Q_{\phi}$, then $\rho,x\models\phi$ such that 
$Q_\phi(0)$ indicates that $\rho,0\models\phi$.
\begin{itemize}
% \item 
% If $\phi$ is a propositional atom $p$, then $\mathsf{t}(\phi,x)=p(x)$. Since $\mathsf{t}(\phi)$ holds at each point $x$ of $\mathcal{I}_\rho^{mso}$, we have that if $x\in p^\I$, then $\rho,x\models p$.
\item 
If $\phi =(\neg \theta_j$), then $\mathsf{t}(\phi,x) = (Q_{\phi}(x) \leftrightarrow (x \notin Q_{\theta_j}))$. Since $\mathsf{t}(\phi)$ holds
at every point $x$ of $\I_\rho^{mso}$, it holds that $x\in Q_{\phi}$ iff $x\not\in Q_{\theta_j}$. It follows by induction that $\rho,x\not\models \theta_j$. Thus, $\rho,x\models\phi$. 
\item
If $\phi = (\theta_j \wedge \theta_k)$, then $\mathsf{t}(\phi,x) = (Q_{\phi}(x) \leftrightarrow (Q_{\theta_j}(x) \wedge Q_{\theta_k}(x)))$. Since $\mathsf{t}(\phi)$ holds at every point $x$ of $\I_\rho^{mso}$, it follows that $x\in Q_{\phi}$ iff $x\in Q_{\theta_j}$ and $x\in Q_{\theta_k}$. It follows by induction that $\rho,x\models\theta_j$ and $\rho,x\models\theta_k$. Thus, $\rho,x\models\phi$.
\item
If $\phi = (\theta_j \vee \theta_k)$, then $\mathsf{t}(\phi,x) = (Q_{\phi}(x) \leftrightarrow (Q_{\theta_j}(x) \vee Q_{\theta_k}(x)))$. Since $\mathsf{t}(\phi)$ holds at every point $x$ of $\I_\rho^{mso}$, it follows that $x\in Q_{\phi}$ iff $x\in Q_{\theta_j}$ or $x\in Q_{\theta_k}$. It follows by induction that $\rho,x\models\theta_j$ or $\rho,x\models\theta_k$. Thus, $\rho,x\models\phi$.
\item
If $\phi = (X\theta_j)$, then $\mathsf{t}(\phi,x) = (Q_{\phi}(x)\leftrightarrow ((x \neq last)\wedge Q_{\theta_j}(x+1)))$.
Since $t(\phi)$ holds at every point $x$ of $\I_\rho^{mso}$, it follows that $x\in Q_{\phi}$ iff $x\neq last$ and $x+1\in Q_{\theta_j}$. It follows by induction that $x\neq last$ and $\rho,x\models\theta_j$. Thus, $\rho,x\models\phi$.
\item
If $\phi = (N\theta_j)$, then $\mathsf{t}(\phi,x) = (Q_{\phi}(x)\leftrightarrow ((x = last)\vee Q_{\theta_j}(x+1)))$. 
Since $\mathsf{t}(\phi)$ holds at every point $x$ of $\I_\rho^{mso}$, it follows that $x\in Q_{\phi}$ iff $x=last$ or $x+1\in Q_{\theta_j}$. It follows by induction that $x=last$ or $\rho,x\models\theta_j$. Thus, $\rho,x\models\phi$.

\item If $\phi = (\theta_j U \theta_k$), then $\mathsf{t}(\phi,x) = (Q_{\phi}(x) \leftrightarrow (Q_{\theta_k}(x) \vee ((x \neq last) \wedge Q_{\theta_j}(x) \wedge Q_{\phi}(x+1))))$. Since $\mathsf{t}(\phi)$ holds at every point $x$ of $\I_\rho^{mso}$, it follows that $x\in Q_{\phi}$ iff $x \in Q_{\theta_k}$ or $x \neq last \mbox{ with } x \in Q_{\theta_j} \mbox{ also }x+1\in Q_{\phi}$. Thus, $\rho,x\models\phi$.

\item If $\phi = (\theta_j R \theta_k$), then $\mathsf{t}(\phi,x) = (Q_{\phi}(x) \leftrightarrow (Q_{\theta_k}(x) \wedge ((x = last) \vee Q_{\theta_j}(x) \vee Q_{\phi}(x+1))))$. Since $\mathsf{t}(\phi)$ holds at every point $x$ of $\I_\rho^{mso}$, it follows that $x\in Q_{\phi}$ iff $x \in Q_{\theta_k}$ with $x = last \mbox{ or } x \in Q_{\theta_j} \mbox{ or }x+1\in Q_{\phi}$. Thus, $\rho,x\models\phi$.

\end{itemize}
\end{proof}

\section{Compact \MSO Encoding}\label{sec:omso}

The \MSO encoding described in Section~\ref{sec:mso} is closely related to the translation of \LTLf to alternating automata \cite{DV13}, with each automaton state corresponding to a monadic predicate. The construction, however, is subject only to syntactic minimization. Can we optimize this encoding using \emph{automata-theoretic minimization?} In fact, $\mathsf{MONA}$ itself applies automata-theoretic minimization. Can we use $\mathsf{MONA}$ to produce a more efficient encoding for $\mathsf{MONA}$?

The key observation is that $\mathsf{MONA}$ can produce a compact symbolic representation of a non-deterministic automaton (NFA) representing a given \LTLf formula, and we can use this symbolic NFA to create a more compact \MSO encoding for \LTLf. This is based on the observation that while the translation from \LTLf to DFA is 2-$\mathsf{EXP}$~\cite{KupfermanVa01}, the translation from \emph{past} \LTLf to DFA is 1-$\mathsf{EXP}$, as explained below. 
We proceed as follows: 
(1) Reverse a given \LTLf formula $\phi$ to \emph{Past} \LTLf formula $\phi^R$; (2) Use \MONA to construct the DFA of $\phi^R$, the reverse of which is an NFA, that accepts exactly the reverse language of the words satisfying $models(\phi)$; (3) Express this symbolic DFA in second-order logic and ``invert'' it to get $\mathcal{D}_{\phi}$, the corresponding DFA of $\phi$.

The crux of this approach, which follows from \cite{CKS81,DV13}, is that
%MYV:
%while the DFA corresponding to an \LTLf formula $\phi$ of length $n$ has $2^{2^n}$ states, 
the DFA corresponding to the reverse language of an \LTLf formula $\phi$ of length $n$ has only $2^n$ states. The reverse of this latter DFA is an NFA for $\phi$. We now elaborate on these steps.
%We first describe such compact \MSO encoding by introducing \emph{Past} \LTLf, or \PLTLf.

\subsection{\LTLf to \pltlf}
Past Linear Temporal Logic over finite traces, i.e. \pltlf, has the same syntax as \PLTL over infinite traces introduced in~\cite{Pnu77}. Given a set of propositions $\mathcal{P}$, the grammar of \pltlf is given by:\\
\centerline{
$\psi ::= \top\ |\ \bot\ |\ p\ |\ \neg \psi\ |\ \psi_1\wedge\psi_2\ |\ Y\psi\ |\ \psi_1 S \psi_2$
}
%  We assume standard semantics from~\cite{LichtensteinPZ85} and write $\rho \models \psi$, if $\rho \in (2^\P)^*$ satisfies $\psi$. \pltlf formulas are interpreted over finite traces. 
Given a finite trace $\rho$ and a \pltlf formula  $\psi$, we inductively define when $\psi$ is $true$ for $\rho$ at step $x$ ($0 \leq x < |\rho|$), written by $\rho, x \models \psi$, as follows: 
\begin{itemize}
  \item $\rho, x \models\tt$ and $\rho, x \not\models\bot$;
  \item $\rho, x \models p$ iff $p \in \rho[x]$;
  \item $\rho, x \models \neg \psi$ iff $\rho,x \not\models \psi$;
  \item $\rho, x \models\psi_1 \wedge \psi_2$, iff $\rho,x \models \psi_1$ and $\rho, x \models \psi_2$;
  \item $\rho, x \models Y\psi$, iff $x-1 \geq 0$ and $\rho, x-1 \models \psi$;
  \item $\rho, x \models \psi_1 S \psi_2$, iff there exists $y$ such that $0 \leq y \leq x$ and $\rho, y\models \psi_2$, and for all $z$, $y < z \leq x$, we have $\rho, z \models \psi_1$.
\end{itemize}
A \pltlf formula $\psi$ is $true$ in $\rho$, denoted by $\rho \models \psi$, if and only if $\rho, |\rho|-1\models\psi$. 
To reverse an \LTLf formula $\phi$, we replace each temporal operator in $\phi$ with the corresponding \emph{past} operator of \pltlf thus getting $\phi^R$. $X$(Next) and $U$(Until) correspond to $Y$(Before) and $S$(Since) respectively. 

We define $\rho^R = \rho[|\rho|-1],\rho[|\rho|-2],\ldots,\rho[1],\rho[0]$ to be the reverse of $\rho$. Moreover, given language $\L$, we denote the reverse of $\mathcal{L}$ by $\mathcal{L}^R$ such that $\mathcal{L}^R$ collects all reversed sequences in $\mathcal{L}$. Formally speaking, $\mathcal{L}^R = \{\rho^R ~:~ \rho \in \mathcal{L}\}$.
The following theorem shows that \PLTLf formula $\phi^R$ accepts exactly the reverse language satisfying $\phi$.

\begin{theorem}\label{thm:pltlf}
Let $\mathcal{L}(\phi)$ be the language of \LTLf formula $\phi$ and $\mathcal{L}^R(\phi)$ be the reverse language, then $\mathcal{L}(\phi^R) = \mathcal{L}^R(\phi)$.
\end{theorem}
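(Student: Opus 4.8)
The plan is to prove the theorem by first establishing a stronger \emph{pointwise} correspondence between satisfaction of $\phi$ in $\rho$ and satisfaction of $\phi^R$ in the reversed trace $\rho^R$, and then specialising it to the defining positions of the two semantics. Throughout, fix a nonempty finite trace $\rho$ with $n = |\rho|$, so that $\rho^R[j] = \rho[n-1-j]$ and position $x$ of $\rho$ corresponds to position $n-1-x$ of $\rho^R$. The central claim is: for every subformula $\psi$ of $\phi$ and every $x$ with $0 \le x < n$,
\[
\rho, x \models \psi \iff \rho^R,\, n-1-x \models \psi^R .
\]
I would prove this by structural induction on $\psi$ following the grammar $\top \mid \bot \mid p \mid \neg\psi \mid \psi_1 \wedge \psi_2 \mid X\psi \mid \psi_1 U \psi_2$ (the derived operators $\vee$, $N$, $R$, $F$, $G$ reduce to these together with negation, which commutes with reversal).

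The base cases are immediate: $\top,\bot$ are position-independent, and for an atom $p$ we have $p \in \rho[x]$ iff $p \in \rho^R[n-1-x]$ by the definition of $\rho^R$, so the interpretation $Q_p$ is preserved under the index flip. The boolean cases $\neg$ and $\wedge$ follow directly from the induction hypothesis, since reversal leaves the boolean structure and the position unchanged. For the $X$/$Y$ case, note that $(X\psi)^R = Y\psi^R$, and the next position $x+1$ of $\rho$ maps to $n-1-(x+1) = (n-1-x)-1$, which is exactly the previous position of $n-1-x$ in $\rho^R$; moreover the side condition $x+1 < n$ translates precisely into $(n-1-x)-1 \ge 0$, matching the semantics of $Y$. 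The induction hypothesis then closes this case.

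The crux, and the step I expect to require the most careful bookkeeping, is the $U$/$S$ case, where $(\psi_1 U \psi_2)^R = \psi_1^R S \psi_2^R$. Here I would substitute the witness position $y \in [x, n-1]$ for the until by $y' = n-1-y \in [0, n-1-x]$, so that the existential range of $U$ (forward from $x$) becomes exactly the existential range of $S$ (backward from $n-1-x$). The delicate point is the intermediate universal quantifier: the set of positions $z$ with $x \le z < y$ maps under $z \mapsto n-1-z$ to the set of positions $z'$ with $y' < z' \le n-1-x$, which is precisely the window required by the semantics of $S$. Combining the two correspondences with the induction hypotheses for $\psi_1$ and $\psi_2$ yields the equivalence. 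I would double-check the boundary indices here, since an off-by-one error in the endpoints of either quantifier range would break the argument.

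Finally, I would derive the theorem from the lemma. Instantiating it at $x=0$ gives $\rho, 0 \models \phi$ iff $\rho^R, n-1 \models \phi^R$; since $\rho \models \phi$ is defined as $\rho, 0 \models \phi$ and $\sigma \models \phi^R$ is defined as $\sigma, |\sigma|-1 \models \phi^R$ with $|\rho^R| = n$, this says exactly that $\rho \in \mathcal{L}(\phi)$ iff $\rho^R \in \mathcal{L}(\phi^R)$. Because reversal is an involutive bijection on finite traces, applying it to every member of $\mathcal{L}(\phi)$ then gives $\mathcal{L}^R(\phi) = \{\rho^R : \rho \in \mathcal{L}(\phi)\} = \{\sigma : \sigma \in \mathcal{L}(\phi^R)\} = \mathcal{L}(\phi^R)$, as required.
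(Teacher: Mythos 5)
Your proof is correct, and the index bookkeeping in the $U$/$S$ case checks out: $y\in[x,n-1]$ maps to $y'=n-1-y\in[0,n-1-x]$ and the window $x\le z<y$ maps to $y'<z'\le n-1-x$, exactly matching the semantics of $S$. The route differs from the paper's in a useful way. The paper also argues by structural induction, but its induction hypothesis is stated at the level of whole traces ($\rho\models\theta$ iff $\rho^R\models\theta^R$), and for the temporal cases it passes to \emph{suffixes} of $\rho$: e.g.\ for $X\phi_1$ it applies the hypothesis to $\rho'=\rho[1],\ldots,\rho[last]$ and then transfers $\rho'^R\models\phi_1^R$ to $\rho^R,last-1\models\phi_1^R$. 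That transfer silently uses the fact that satisfaction of a pure-past formula at position $j$ depends only on the prefix up to $j$ --- true, but never stated --- and the paper's $U$ case also has the roles of $\phi_1^R$ and $\phi_2^R$ swapped in the displayed conclusion. Your pointwise lemma $\rho,x\models\psi\iff\rho^R,n-1-x\models\psi^R$ over a single fixed trace strengthens the induction hypothesis just enough to avoid both issues: no auxiliary traces are introduced, so nothing needs to be said about prefix-dependence, and the quantifier ranges can be verified by direct substitution. The cost is only the slightly heavier index arithmetic, which you handle correctly; the final derivation of $\mathcal{L}(\phi^R)=\mathcal{L}^R(\phi)$ from the $x=0$ instance via involutivity of reversal is the same in both arguments.
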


\begin{proof}
$\mathcal{L}(\phi^R) = \mathcal{L}^R(\phi)$ iff for an arbitrary sequence $\rho \in \mathcal{L}(\phi)$ such that $\rho \models \phi$, it is true that $\rho^R \models \phi^R$. We prove the theorem by the induction over the structure of $\phi$. $last$ is used to denote the last instance such that $last = |\rho|-1$.
\begin{itemize}
\item Basically, if $\phi = p$ is an atom, then $\phi^R = p$, $\rho \models \phi$ iff $p\in\rho[0]$ such that $p \in \rho^R[last]$. Therefore, $\rho^R \models \phi^R$;

\item If $\phi = \neg \phi_1$, then $\phi^R = \neg \phi_1^R$, $\rho \models \neg \phi_1$ iff $\rho \nvDash \phi_1$, such that by induction hypothesis $\rho^R \nvDash \phi_1^R$ holds, therefore $\rho^R \models \phi^R$ is true;

\item If $\phi =\phi_1 \wedge \phi_2$, then $\phi^R = \phi_1^R \wedge \phi_2^R$, $\rho \models \phi$ iff $\rho$ satisfies both $\phi_1$ and $\phi_2$. By induction hypothesis $\rho^R \models \phi_1^R$ and $\rho^R \models \phi_2^R$ hold, therefore $\rho^R \models \phi^R$ is true;

\item If $\phi= X\phi_1$, $\phi^R = Y\phi_1^R$, $\rho \models \phi$ iff suffix $\rho'$ is sequence$\rho[1],\rho[2],\ldots,\rho[last]$ and $\rho' \models \phi_1$. By induction hypothesis, $\rho'^R \models \phi_1^R$ holds, in which case $\rho^R,{last-1} \models \phi_1^R$ is true, therefore $\rho^R \models \phi^R$ holds. 

\item If $\phi = \phi_1 U \phi_2$, $\rho \models \phi$ iff there exists $y$ such that $y~(0\leq y \leq last)$, suffix $\rho' = \rho[y],\rho[y+1],\ldots,\rho[last]$ satisfies $\phi_2$. Also for all $z$ such that $z~(0\leq z< y)$, $\rho'' = \rho[z],\rho[z+1],\ldots,\rho[last]$ satisfies $\phi_1$. By induction hypothesis, $\rho'^R \models \phi_1^R$ and $\rho''^R \models \phi_2^R$ hold, therefore we have $\rho^R,last-y \models \phi_2^R$ and $\forall z.last-y < z \leq last, \rho^R,z \models \phi_1^R$ hold such that $\rho^R \models \phi^R$. The proof is done.
\end{itemize}
\end{proof}

\subsection{\pltlf to DFA}
The DFA construction from \pltlf formulas relies on $\mathsf{MONA}$ as well. Given \PLTLf formula $\psi$, we are able to translate $\psi$ to \FOL formula as input of $\mathsf{MONA}$, which returns the DFA. For \pltlf formula $\psi$ over $\mathcal{P}$, we construct the corresponding \FOL formula with respect to point $x$ by a function $\mathsf{fol_p}(\psi,x)$ asserting the truth of $\psi$ at $x$. Detailed translation of \PLTLf to \FOL is defined below. The translation uses the predecessor function $-1$, and the predicate $last$ referring to the last point along the finite trace.
\begin{itemize}
\item $\mathsf{fol_p}(p, x) = (Q_p(x))$
\item $\mathsf{fol_p}(\neg \psi, x) = (\neg \mathsf{fol_p}(\psi,x))$
\item $\mathsf{fol_p}(\psi_1 \wedge \psi_2, x) = (\mathsf{fol_p}(\psi_1, x) \wedge \mathsf{fol_p}(\psi_2, x))$
\item $\mathsf{fol_p}(Y \psi, x) = ((\exists y)((y = x-1) \wedge (y \geq 0) \wedge \mathsf{fol_p}(\psi, y)))$
\item $\mathsf{fol_p}(\psi_1 S \psi_2, x) = ((\exists y)((0\leq y\leq x) \wedge \mathsf{fol_p}(\psi_2, y) \wedge (\forall z)((y < z \leq x) \rightarrow \mathsf{fol_p}(\psi_1,z))))$
\end{itemize}
Consider a finite trace $\rho$, the corresponding interpretation $\mathcal{I}_{\rho}$ is defined as in Section~\ref{sec:fol}. The following theorem guarantees the correctness of the above translation.

\begin{theorem}\label{thm:pltlf2fol}{\rm \cite{Kam68}}
Let $\psi$ be a \PLTLf formula, $\rho$ be a finite trace. Then $\rho \models \psi$ iff $\mathcal{I}_{\rho} \models \mathsf{fol_p}(\psi, last)$, where $last = |\rho|-1$.
\end{theorem}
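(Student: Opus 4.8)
The plan is to prove Theorem~\ref{thm:pltlf2fol} by structural induction on the \pltlf formula $\psi$, establishing the stronger pointwise claim that $\mathcal{I}_\rho \models \mathsf{fol_p}(\psi, x)$ iff $\rho, x \models \psi$ for every point $x$ with $0 \le x < |\rho|$, and then specializing to $x = last$ to obtain the theorem statement. Proving the pointwise version rather than only the claim at $last$ is essential, because the semantic clauses for $Y$ and $S$ refer to truth at \emph{earlier} points $x-1$ and $y \le x$, so the induction hypothesis must be available at arbitrary positions, not just the final one.

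First I would set up the interpretation $\mathcal{I}_\rho = (\Delta^\I, <, \cdot^\I)$ exactly as in Section~\ref{sec:fol}, recalling that $Q_p = \{x : p \in \rho[x]\}$ and that the predicate $last$ denotes $|\rho|-1$. The base case $\psi = p$ is immediate: $\mathsf{fol_p}(p,x) = Q_p(x)$ holds in $\mathcal{I}_\rho$ iff $p \in \rho[x]$ iff $\rho, x \models p$. The Boolean cases $\neg\psi$ and $\psi_1 \wedge \psi_2$ follow directly by unfolding the definition of $\mathsf{fol_p}$ and applying the induction hypothesis, since the translation commutes with the connectives and the satisfaction relation is defined compositionally at the same point $x$.

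The temporal cases are where the argument has content. For $\psi = Y\psi_1$, I would observe that $\mathsf{fol_p}(Y\psi_1, x)$ asserts the existence of a $y$ with $y = x-1$, $y \ge 0$, and $\mathsf{fol_p}(\psi_1, y)$; this witness exists in $\mathcal{I}_\rho$ precisely when $x - 1 \ge 0$ and, by the induction hypothesis applied at the point $x-1$, when $\rho, x-1 \models \psi_1$, which matches the semantic clause for $Y$ exactly. For $\psi = \psi_1 S \psi_2$, I would align the bounded existential quantifier $(\exists y)(0 \le y \le x)$ with the semantic ``there exists $y$ with $0 \le y \le x$ and $\rho, y \models \psi_2$,'' and align the bounded universal $(\forall z)(y < z \le x \to \mathsf{fol_p}(\psi_1, z))$ with the ``for all $z$, $y < z \le x$, $\rho, z \models \psi_1$'' requirement; the two induction hypotheses (for $\psi_2$ at $y$ and for $\psi_1$ at each $z$ in the range) close the equivalence in both directions.

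The main obstacle, though largely bookkeeping rather than conceptual, is keeping the quantifier ranges and boundary conditions exactly synchronized between the \FOL formula and the \pltlf semantics---in particular the asymmetry in the $S$ clause between the inclusive bound $0 \le y \le x$ on the witness and the strict lower bound $y < z \le x$ on the universally quantified positions, and ensuring the $y \ge 0$ guard in the $Y$ case correctly captures the ``$x - 1 \ge 0$'' side condition that prevents stepping before the start of the trace. Since this theorem is a finite-trace restatement of Kamp's theorem (as the citation indicates), no genuinely new idea is required beyond verifying that these finite-horizon boundary cases behave correctly; the bulk of the work is confirming that each translation clause is a faithful first-order rendering of the corresponding semantic clause.
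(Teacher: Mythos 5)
Your proof is correct, but it takes a genuinely different route from the paper's. You strengthen the statement to a pointwise claim over a \emph{fixed} trace---$\mathcal{I}_\rho \models \mathsf{fol_p}(\psi,x)$ iff $\rho,x\models\psi$ for all $0\le x<|\rho|$---and do structural induction with the hypothesis available at every position, then specialize to $x=last$. The paper instead keeps the evaluation point fixed at $last$ and varies the \emph{trace}: for the $Y$ and $S$ cases it passes to prefixes $\rho'=\rho[0],\ldots,\rho[y]$ of $\rho$, builds the corresponding interpretations $\mathcal{I}_{\rho'}$, checks that their monadic predicates $Q_p'$ agree with $Q_p$ on the shared domain, and applies the induction hypothesis to each prefix at its own last position. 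Your decomposition buys a cleaner argument: it avoids constructing and relating auxiliary interpretations altogether, and it sidesteps the index bookkeeping that makes the paper's $S$ case somewhat opaque (the paper's positions such as $last-y$ and $last-z$ do not obviously line up with the stated semantics of $S$, whereas in your version the witness $y$ and the universally quantified $z$ in the \FOL formula correspond literally to the $y$ and $z$ of the semantic clause). The paper's approach has the minor virtue of proving exactly the stated theorem at each induction step without introducing a stronger intermediate claim, but your pointwise lemma is the standard and more transparent way to handle past operators, whose semantics inherently reference earlier positions.
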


\begin{proof}
We prove the theorem by the induction over the structure of $\psi$.
\begin{itemize}
\item Basically, if $\psi = p$ is an atom, $\rho \models \psi$ iff $p\in\rho[last]$. By the definition of $\mathcal{I}$, we have that $last \in Q_p$. Therefore, $\rho \models \psi$ iff $\mathcal{I}_{\rho} \models \mathsf{fol_p}(p, last)$ holds;

\item If $\psi = \neg \psi$, $\rho \models \neg \psi$ iff $\rho \nvDash \psi$. By induction hypothesis it is true that $\mathcal{I}_{\rho} \nvDash \mathsf{fol_p}(\psi, last)$, therefore $\mathcal{I}_{\rho} \models \mathsf{fol_p}(\neg \psi, last)$ holds;

\item If $\psi =\psi_1 \wedge \psi_2$, $\rho \models \psi$ iff $\rho$ satisfies both $\psi_1$ and $\psi_2$. By induction hypothesis, it is true that $\mathcal{I}_{\rho} \models \mathsf{fol_p} (\psi_1,last)$ and $\mathcal{I}_{\rho} \models \mathsf{fol_p}(\psi_2, last)$. Therefore $\mathcal{I}_{\rho} \models \mathsf{fol_p}(\psi_1,last)\wedge \mathsf{fol_p}(\psi_2, last)$ holds;

\item If $\psi= Y\psi_1$, $\rho \models \psi$ iff prefix $\rho'= \rho[0],\rho[1],\ldots,\rho[last-1]$ of $\rho$ satisfies $\rho' \models \psi_1$. Let $\mathcal{I}_{\rho}'$ be the corresponding interpretation of $\rho'$, thus for every atom $p \in \P$, $x\in Q_p'$ iff $x\in Q_p$ where $Q_p'$ is the corresponding monadic predicate of $p$ in $\I_\rho'$. By induction hypothesis it is true that $\mathcal{I}_{\rho}'\models \mathsf{fol_p} (\psi_1, last-1)$, therefore $\mathcal{I}_{\rho}\models \mathsf{fol_p} (Y\psi_1, last)$ holds. 

\item If $\psi = \psi_1 S \psi_2$, $\rho \models \psi$ iff there exists $y$ such that $0\leq y \leq last$ and prefix $\rho' = \rho[0],\rho[1],\ldots,\rho[y]$ of $\rho$ satisfies $\psi_2$ and for all $z$ such that $y< z \leq last$, $\rho'' = \rho[0],\rho[1],\ldots,\rho[z]$ satisfies $\psi_1$. Let $\mathcal{I}_{\rho}'$ and $\mathcal{I}_{\rho}''$ be the corresponding interpretations of $\rho'$ and $\rho''$. Thus for every atom $p \in \P$ it is true that 
$x\in Q_p'$ iff $x \in Q_p$, $x\in Q_p''$ iff $x \in Q_p$, where $Q_p'$ and $Q_p''$ correspond to the monadic predicates of $p$ in $\I_\rho'$ and $\I_\rho''$ respectively. By induction hypothesis it is true that $\mathcal{I}_{\rho}'\models \mathsf{fol_p} (\psi_2, last-y)$ and $\mathcal{I}_{\rho}''\models \mathsf{fol_p} (\psi_1, last-z)$ hold, therefore $\mathcal{I}_{\rho} \models \mathsf{fol_p}(\psi_1 S \psi_2,last)$. 
\end{itemize}
\end{proof}

\subsection{Reversing DFA via Second-Order Logic}
For simplification, from now we use $\psi$ to denote the corresponding \PLTLf formula $\phi^R$ of \LTLf formula~$\phi$. We first describe how BDDs represent a symbolic DFA. Then we introduce the Compact \MSO encoding that inverts the DFA by formulating such BDD representation into a second-order formula.
The connection between BDD representation and second-order encoding is novel, to the best of our knowledge.
% The DFA given by $\mathsf{MONA}$, represented using ShMTBDDs, introduces a state variable for each explicit state such that the number of state variables may be exponential in~$|\phi|$. Therefore, we use the transformation method from ShMTBDD to BDD described in~\cite{ZTLPV17} to reduce such blow-up by a binary representation of each explicit state. 

As defined in Section~\ref{SDFA}, given a symbolic DFA $\mathcal{F}_{\psi} = (\P, \X, X_0, \eta, f)$ represented by a sequence $\mathcal{B} = \langle B_0, B_1, \ldots, B_{k-1}\rangle$ of BDDs, where there are $k$ variables in $\X$, a run of such DFA on a word $\rho = \rho[0],\rho[1],\ldots,\rho[e-1]$ involves a sequence of states $\xi = X_0, X_1, \ldots, X_e$ of length $e+1$. For the moment if we omit the last state reached on an input of length $e$, we have a sequence of states $\xi' = X_0,X_1, \ldots, X_{e-1}$ of length~$e$. Thus we can think of the run $\xi'$ as a labeling of the positions of the word with states, which is $(\rho[0], X_0), (\rho[1],X_1),\ldots,(\rho[e-1],X_{e-1})$. At each position with given word and state, the transition moving forward involves a computation over every $B_q$ ($0 \leq q \leq k-1$). To perform such computation, take the high branch in every node labeled by variable $v \in \{\X \cup \P\}$ if $v$ is assigned $1$ and the low branch otherwise.

The goal here is to write a formula $\mathsf{Rev}(\F_\psi)$ such that there is an accepting run over $\mathcal{F}_{\psi}$ of a given word $\rho$ iff $\rho^R$ is accepted by $\mathsf{Rev}(\F_\psi)$. To do this, we introduce one second-order variable $V_q$ for each $x_q \in \X$ with $0 \leq q \leq k-1$, and one second-order variable $N_\alpha$ for every nonterminal node $\alpha$ in BDDs, $u$ nonterminal nodes in total. The $V_q$ variables collect the positions where $x_q$ holds, and the $N_\alpha$ variables indicate the positions where the node $\alpha$ is visited, when computing the transition. To collect all transitions moving towards accepting states, we have BDD $B_f' = f(\eta(\X, \P))$. 

Here are some notations. Let $\alpha$ be a nonterminal node, $c$ be a terminal node in $B_q$ such that $c \in \{0,1\}$ and $d \in \{0,1\}$ be the value of $v$. For nonterminal node $\alpha$, we define:\\
\centerline{$\mathsf{Pre}(\alpha) = \{(\beta,v,d)~:~\mbox{there is an edge from } \beta \mbox{ to } \alpha \mbox{ labelled by } v=d\}$}\\
\centerline{$\mathsf{Post}(\alpha) = \{(\beta,v,d)~:~\mbox{there is an edge from } \alpha \mbox{ to } \beta \mbox{ labelled by } v=d\}$}\\
For every terminal node $c$ in BDD $B_q$, we define:\\
\centerline{$\mathsf{PreT}(B_q,c) = \{(\beta,v,d)~:~\mbox{there is an edge from } \beta \mbox{ to } c \mbox{ labelled by } v=d \mbox{ in BDD } B_q\}$}\\
Also, we use $\in^d$ to denote $\in$ when $d=1$ and $\notin$ when $d=0$. For each BDD $B_q$, $\mathsf{root}(B_q)$ indicates the root node of $B_q$.

We use these notations to encode the following statements:

(1) At the \textbf{last} position, state $X_0$ should hold since $\xi'$ is being inverted and $X_0$ is the starting point;\\
\centerline{$\mathsf{Rinit} = (x=last) \rightarrow \left(\bigwedge_{0\leq q \leq k-1, X_0(x_q) = d}x \in^d V_q \right)$;} 

(2) At position $x$, if the current computation is at nonterminal node $\alpha$ labeled by $v$, then (2.a) the current computation must come from a predecessor labeled by $v'$ following the value of $v'$, and (2.b) the next step is moving to the corresponding successor following the value of $v$;\\
\centerline{$\mathsf{node} = \bigwedge_{1\leq \alpha\leq u} (~\mathsf{PreCon} \wedge \mathsf{PostCon}~)$; where}
\begin{align*}
\mathsf{PreCon} &= \left( x \in N_\alpha \rightarrow (\bigvee_{(\beta,v',d)\in \mathsf{Pre}(\alpha)} [x\in N_\beta \wedge x\in^d v' ])\right)  \\
\mathsf{PostCon} &= \left(\bigwedge_{(\beta,v,d) \in \mathsf{Post}(\alpha)} [x \in N_\alpha \wedge x \in^d v \rightarrow x \in N_\beta] \right).
\end{align*}

(3) At position $x$ such that $\textbf{x}>\textbf{0}$, if the current computation node $\alpha$ moves to a terminal node $c$ of $B_q$, then the value of $x_q$ at position $\textbf{x-1}$ is given by the value of $c$. Such computations of all $B_q(0 \leq q \leq k-1)$ finish one transition;
\begin{align*}
\mathsf{Rterminal}= \bigwedge_{0\leq q \leq k-1} \left(\bigwedge_{(\beta,v,d)\in \mathsf{PreT}(B_q,c)} [(x>0 \wedge x \in N_\beta \wedge x \in^d v) \rightarrow (x-1 \in^c V_q)] \right);
\end{align*}

(4) At the \textbf{first} position, the current computation on $B_f'$ has to surely move to terminal 1, therefore terminating the running trace of $\xi$.
\begin{align*}
\mathsf{Racc} = (x=0) \rightarrow \left( \bigvee_{(\beta,v,d) \in \mathsf{PreT}(B_f',1)}[x \in N_\beta \wedge x \in^d v] \right).
\end{align*}

To get all computations over BDDs start from the root at each position, we have\\
\centerline{$\mathsf{roots} = \bigwedge_{0\leq x \leq last} \bigwedge_{0\leq q \leq k-1} x \in \mathsf{root}(B_q)$}
$\mathsf{Rev}(\F_\psi)$ has to take a conjunction of all requirements above such that\\ 
\centerline{
$\mathsf{Rev}(\F_\psi) = (\exists V_0)(\exists V_1) \ldots (\exists V_{k-1})(\exists N_1)(\exists N_2) \ldots (\exists N_u)(\forall x)(\mathsf{Rinit} \wedge \mathsf{node}$}
\centerline{
$\wedge \mathsf{Rterminal} \wedge \mathsf{Racc} \wedge \mathsf{roots}).$}

Therefore, let $\mathsf{Cmso}(\phi)$ be the translation function that given an \LTLf formula $\phi$ returns a corresponding second-order formula applying the Compact \MSO encoding, we define  $\mathsf{Cmso}(\phi) = \mathsf{Rev}(\F_\psi)$ asserting the truth of $\phi$ at position 0, where $\psi$ is the corresponding \PLTLf formula of $\phi$, and $\F_\psi$ is the symbolic DFA of $\psi$. The following theorem asserts the correctness of the Compact \MSO encoding.

\begin{theorem}
The models of formula $\mathsf{Cmso}(\phi)$ are exactly the words satisfying $\phi$.
\end{theorem}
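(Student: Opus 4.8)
The plan is to reduce the statement, via Theorem~\ref{thm:pltlf}, to a single correctness claim about the ``inversion'' performed by $\mathsf{Rev}(\F_\psi)$. Since Theorem~\ref{thm:pltlf} gives $\mathcal{L}(\psi)=\mathcal{L}(\phi^R)=\mathcal{L}^R(\phi)$, and $\F_\psi$ is by construction the DFA accepting $\mathcal{L}(\psi)$, a word $\sigma$ satisfies $\phi$ iff $\sigma^R\in\mathcal{L}(\psi)$, i.e.\ iff $\F_\psi$ has an accepting run on $\sigma^R$. Hence it suffices to establish the equivalence
\[
\sigma\models\mathsf{Rev}(\F_\psi)\quad\Longleftrightarrow\quad \F_\psi\text{ accepts }\sigma^R ,
\]
after which the theorem follows by chaining with Theorem~\ref{thm:pltlf}.

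To prove this equivalence I would exhibit a correspondence between accepting runs of $\F_\psi$ on $\sigma^R$ and satisfying interpretations of the existentially quantified second-order variables $V_0,\ldots,V_{k-1},N_1,\ldots,N_u$ over $\sigma$. Writing $e=|\sigma|$ and $last=e-1$, the guiding invariant is that position $x$ of $\sigma$ carries run-state $X_{e-1-x}$ of the run $X_0,X_1,\ldots,X_e$ on $\sigma^R$: namely, $x\in V_q$ iff bit $x_q$ is set in $X_{e-1-x}$, and $x\in N_\alpha$ iff node $\alpha$ is visited while evaluating the transition out of $X_{e-1-x}$ on input letter $\sigma[x]$. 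For completeness ($\Leftarrow$) I would start from the unique accepting run, define $V_q$ and $N_\alpha$ by this invariant, and verify each conjunct: $\mathsf{Rinit}$ records the initial state $X_0$ at position $last$; $\mathsf{roots}$ marks each root at every position; $\mathsf{PostCon}$ propagates the visit down each BDD according to the bits read from the $V$'s and $Q_p$'s; $\mathsf{PreCon}$ certifies that each marked node lies on a genuine evaluation; $\mathsf{Rterminal}$ copies the terminal value of $B_q$ into the state bit $x_q$ at the predecessor position $x-1$; and $\mathsf{Racc}$ asserts that the final transition, computed through $B_f'=f(\eta(\X,\P))$, lands in an accepting state. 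For soundness ($\Rightarrow$) I would run the argument backwards by induction on $x=last,last-1,\ldots,0$: $\mathsf{Rinit}$ supplies the base state $X_0$ at $last$; assuming the state at $x$ is correctly encoded, $\mathsf{roots}$ and $\mathsf{PostCon}$ force the marks to follow the unique assignment-consistent path, so $\mathsf{Rterminal}$ fixes the state at $x-1$ to be precisely $\eta(X_{e-1-x},\sigma[x])=X_{e-x}$; and $\mathsf{Racc}$ at $x=0$ delivers $f(X_e)=1$.

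The step I expect to be the main obstacle is establishing that the node variables $N_\alpha$ faithfully track the BDD evaluations, and two issues demand care. First, the index inversion must be handled consistently: evaluating $B_q$ on the bits of $X_{e-1-x}$ together with the letter $\sigma[x]$ produces bit $x_q$ of the \emph{next} run-state $X_{e-x}$, which must be recorded at position $x-1$ rather than $x$; this is precisely why $\mathsf{Rterminal}$ carries the guard $x>0$ while the boundary case $x=0$ is treated separately by $\mathsf{Racc}$ through the acceptance BDD $B_f'$. Second, because MONA stores the $B_q$ and $B_f'$ as a single shared DAG in which a node may have many predecessors---and may simultaneously be the root of one BDD and an internal node of another---one must show that at each position the marked set coincides with the union of the evaluation paths, neither under- nor over-approximating it. The lower bound (every true evaluation path is marked) is the easy half, supplied by $\mathsf{roots}$ seeding each root and $\mathsf{PostCon}$ pushing the mark along the unique assignment-consistent edge; this already makes $\mathsf{Rterminal}$ record the correct bit. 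The upper bound (no stray marks) is the delicate half and is the purpose of $\mathsf{PreCon}$: it forces every marked non-root node to be reachable from a root along assignment-consistent edges, so that a mark lying off the real evaluation path cannot spuriously satisfy the disjunction in $\mathsf{Racc}$ or write an inconsistent bit through $\mathsf{Rterminal}$. Carrying out this back-reachability argument rigorously in the presence of node sharing is the technical heart of the proof; once the marked set is pinned down at every position, the inductive reconstruction of the run and the closing acceptance test are routine.
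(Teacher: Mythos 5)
Your decomposition is exactly the paper's: chain $\L(\phi)=\L^R(\psi)=\L^R(\F_\psi)$ (the \PLTLf reversal theorem plus the DFA construction for $\psi$) with the single remaining claim $\L(\mathsf{Rev}(\F_\psi))=\L^R(\F_\psi)$. The paper's own proof simply asserts that last claim ``following the construction rules,'' whereas you sketch the run-to-interpretation correspondence in detail --- the index inversion assigning state $X_{e-1-x}$ to position $x$, the role of each conjunct, and the need for $\mathsf{PreCon}$ to exclude stray marks on the shared BDD nodes --- so your account is consistent with, and strictly more informative than, the published argument.
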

\begin{proof}
 We first have that $\L(\phi) = \L^R(\psi) = \L^R(\F_\psi)$ holds since $\psi$ is the corresponding \PLTLf formula of $\phi$ and $\F_\psi$ collects exactly the words satisfying $\psi$.
 Moreover, $\L(\mathsf{Rev}(\F_\psi)) = \L^R(\F_\psi)$ is true following the construction rules of $\mathsf{Rev}(\F_\psi)$ described above and $\mathsf{Cmso}(\phi) = \mathsf{Rev}(\F_\psi)$. Therefore, $\L(\phi) = \L(\mathsf{Cmso}(\phi))$ holds, in which case the models of formula $\mathsf{Cmso}(\phi)$ are exactly the words satisfying $\phi$.
\end{proof}

Notice that the size of $\mathsf{Cmso}(\phi)$ is in linear on the size of the BDDs, which lowers the logical complexity comparing to the \MSO encoding in Section~\ref{sec:mso}. Moreover, in the Compact \MSO encoding, the number of existential second-order symbols for state variables are nevertheless possibly less than that in \MSO encoding, but new second-order symbols for nonterminal BDD nodes are introduced. BDDs provide a compact representation, in which redundant nodes are reduced. Such advantages allow Compact \MSO encoding to use as few second-order symbols for BDD nodes as possible.

% \noindent\textbf{\emph{Sloppy Formulation}}
% The formulation described above strictly tracks the computation over each BDD $B_q$. That is, for each nonterminal node $\alpha$, both the forward computation and previous computation must be tracked. This causes a high logical complexity in the formulation. We call this \emph{Fussy Formulation}. An alteration to diminish the logical complexity is to utilize a \emph{Sloppy Formulation}, analogous to the \emph{Sloppy} variation from Section~\ref{sec:mso}, that only tracks the forward computation. Since the previous computations are not tracked, none of the computations leading to terminal node $0$ of the BDD $B_f'$ enable an accepting condition in the \emph{Sloppy Formulation}. Detailed formulation is omitted here due to space limit.

\section{Experimental Evaluation}\label{sec:exp}
We implemented proposed second-order encodings in different parsers for \LTLf formulas using C++. Each parser is able to generate a second-order formula corresponding to the input \LTLf formula, which is then fed to \MONA~\cite{KlaEtAl:Mona} for subsequent symbolic DFA construction. Moreover, we employed \emph{Syft}'s~\cite{ZTLPV17} code to translate \LTLf formula into first-order logic~(\FOL), which adopts the first-order encoding described in Section~\ref{sec:fol}.

\vspace{0.2cm}
\noindent\textbf{\emph{Benchmarks}}
We conducted the comparison of first-order encoding with second-order encoding in the context of \LTLf-to-DFA, thus only satisfiable but not valid formulas are interesting. Therefore, we first ran an \LTLf satisfiability checker on \LTLf formulas and their negations to filter the valid or unsatisfiable formulas. We collected 5690 formulas, which consist of two classes of benchmarks: 765 \LTLf-specific benchmarks, of which 700 are scalable \LTLf pattern formulas from~\cite{CiccioMM16} 
% up to length 100
and 65 are randomly conjuncted common \LTLf formulas from~\cite{GiacomoMM14,CiccioM15,PrescherCM14}
% in the style described in~\cite{LiZPVH13}
; and 4925 \LTL-as-\LTLf formulas from~\cite{RozierV07,RozierV11}, since \LTL formulas share the same syntax as \LTLf.

\vspace{0.2cm}
\noindent\textbf{\emph{Experimental Setup}}
To explore the comparison between first-order and second-order for \LTLf-to-DFA translation, we ran each formula for every encoding on a node within a high performance cluster. These nodes contain 12  processor cores at 2.2 GHz each with 8GB of RAM per core. Time out was set to be 1000 seconds. Cases that cannot generate the DFA within 1000 seconds generally fail even if the time limit is extended, since in these cases, $\mathsf{MONA}$ typically cannot handle the large BDD.

\subsection{Optimizations of Second-Order Encoding}

Before diving into the optimizations of second-order encoding, we first study the potential optimization space of the first-order encoding that translates \LTLf to \FOL. Due to the syntax-driven translation of \FOL encoding, we are only able to apply different normal forms, Boolean Norma Form~(\BNF) and Negation Normal Form~(\NNF). We compared the impact on performance of \FOL encoding with two \LTLf normal forms. It turns out that the normal form does not have a measurable impact on the performance of the first-order encoding. Since \FOL-\BNF encoding performs slightly better than \FOL-\NNF, the best \FOL encoding refers to \FOL-\BNF.

To explore the potential optimization space of the second-order encodings proposed in this paper, we hope to conduct experiments with different optimizations. We name second-order encoding with different optimizations \emph{variations}. We first show optimizations of the \MSO encoding described in Section~\ref{sec:mso}, then describe variations of the Compact \MSO encoding shown in Section~\ref{sec:omso} in the following.

The basic \MSO encoding defined in Section~\ref{sec:mso} translates \LTLf to \MSO in a natural way, in the sense that introducing a second-order predicate for each non-atomic subformula and employing the $\leftrightarrow$ constraint. Inspired by~\cite{PV03,RozierV11}, we define in this section several optimizations to simplify such encoding thus benefiting symbolic DFA construction. These variations indicating different optimizations are combinations of three independent components: (1) the Normal Form~(choose between \BNF or \NNF); (2) the Constraint Form~(choose between \emph{Fussy} or \emph{Sloppy}); (3)the Variable Form~(choose between \emph{Full} or \emph{Lean}). 
% (1)  (2)  (3) 
In each component one can choose either of two options to make. 
Thus for example, the variation described in Section~\ref{sec:mso} is \BNF-\emph{Fussy}-\emph{Full}. 
Note that \BNF-\emph{Sloppy} are incompatible, as described below, and so there are $2^3-2=6$ viable combinations of the three components above. We next describe the variations in details. 

\vspace{0.2cm}
\noindent{\textbf{\emph{Constraint Form}}}
%We call
We call the translation described in Section~\ref{sec:mso} the \emph{Fussy} variation, in which we translate $\phi$ to \MSO formula $\mathsf{mso}(\phi)$ by employing an $\emph{iff}$ constraint (see Section~\ref{sec:mso}). For example:
\begin{equation}\label{eq:example}
\mathsf{t}(\theta_i,x) = (Q_{\theta_i}(x) \leftrightarrow (Q_{\theta_j}(x) \wedge Q_{\theta_k}(x))) \text{ if } \theta_i = (\theta_j \wedge \theta_k)
\end{equation}

We now introduce \emph{Sloppy} variation, inspired by~\cite{RozierV11}, which allows less tight constraints that still hold correctness guarantees thus may speed up the symbolic DFA construction. To better reason the incompatible combination \BNF-\emph{Sloppy}, we specify the description for different normal forms, \NNF and \BNF separately.

For \LTLf formulas in \NNF, the \emph{Sloppy} variation requires only a single implication constraint $\rightarrow$. Specifically the \emph{Sloppy} variation $\mathsf{mso}_s(\phi)$ for \NNF returns \MSO formula $(\exists Q_{\theta_1}) \cdots (\exists Q_{\theta_m})$
$(Q_{\phi}(0) \wedge (\forall x)(\bigwedge_{i=1}^m \mathsf{t}_s(\theta_i,x)))$, where $\mathsf{t}_s(\theta_i)$ is defined just like $\mathsf{t}(\theta_i)$, replacing the $\leftrightarrow$ by $\rightarrow$.
For example translation (\ref{eq:example}) under the \emph{Sloppy} translation for \NNF is $\mathsf{t}_s(\theta_i,x) = (Q_{\theta_i}(x) \rightarrow (Q_{\theta_j}(x) \wedge Q_{\theta_k}(x)))$. 

The \emph{Sloppy} variation cannot be applied to \LTLf formulas in \BNF since the $\leftrightarrow$ constraint defined in function $\mathsf{t}(\theta_i)$ is needed only to handle negation correctly. \BNF requires a general handling of negation.
For \LTLf formulas in \NNF, negation is applied only to atomic formulas such that handled implicitly by the base case $\rho, x \models p \leftrightarrow \rho, x \nvDash \neg p$. Therefore, translating \LTLf formulas in \NNF does not require the $\leftrightarrow$ constraint.
For example, consider \LTLf formula $\phi = \neg Fa$~(in \BNF), where $a$ is an atom. The corresponding \BNF-\textit{Sloppy} variation gives \MSO formula $(\exists Q_{\neg Fa})(\exists Q_{Fa})(Q_{\neg Fa}(0) \wedge ((\forall x)((Q_{\neg Fa}(x) \rightarrow \neg Q_{Fa}(x)) \wedge (Q_{Fa}(x) \rightarrow (Q_a(x) \vee ((x \neq last) \wedge Q_{Fa}(x+1)))))))$ via $\mathsf{mso}_s(\phi)$. Consider finite trace $\rho = (a=0),(a=1)$, $\rho \models \phi$ iff $\rho \models \mathsf{mso}_s(\phi)$ does not hold since $\rho \nvDash \neg Fa$. This happens because $\neg Fa$ requires $(Q_{\neg Fa}(x) \leftrightarrow \neg Q_{Fa}(x))$ as $Fa$ is an non-atomic subformula. Therefore, \emph{Sloppy} variation can only be applied to \LTLf formulas in \NNF.

The following theorem asserts the correctness of the \emph{Sloppy} variation. 
% For proof, we refer to Appendix~\ref{app::mso}.

\begin{theorem}\label{thm:msosloppy}
Let $\phi$ be an \LTLf formula in \NNF and $\rho$ be a finite trance. Then $\rho \models \phi$ iff $\mathcal{I}_{\rho} \models \mathsf{mso}_s(\phi)$.
\end{theorem}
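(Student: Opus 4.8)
The plan is to prove the two implications separately, reusing the correctness of the \emph{Fussy} encoding (Theorem~\ref{thm:msofussy}) for the easy direction and isolating the exact role of \NNF in the hard direction.

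For the forward direction ($\rho \models \phi$ implies $\I_\rho \models \mathsf{mso}_s(\phi)$), I would take the canonical expansion $\I_\rho^{mso}$ that sets $Q_{\theta_i} = \{x : \rho, x \models \theta_i\}$ for every nonatomic subformula $\theta_i \in cl(\phi)$, exactly as in the proof of Theorem~\ref{thm:msofussy}. That proof shows each biconditional constraint $\mathsf{t}(\theta_i, x)$ holds under this expansion; since $\mathsf{t}_s(\theta_i, x)$ is obtained from $\mathsf{t}(\theta_i, x)$ by weakening $\leftrightarrow$ to $\rightarrow$, every $\mathsf{t}_s(\theta_i, x)$ holds a fortiori, and $Q_\phi(0)$ holds because $\rho, 0 \models \phi$. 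Equivalently, $\mathsf{mso}(\phi)$ logically entails $\mathsf{mso}_s(\phi)$, so this direction is immediate from Theorem~\ref{thm:msofussy} and needs no assumption on normal form.

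For the backward direction, suppose $\I_\rho \models \mathsf{mso}_s(\phi)$, witnessed by some expansion $\I_\rho^{mso}$ with predicates $Q_{\theta_i}$ satisfying $Q_\phi(0)$ and $(\forall x)\bigwedge_i \mathsf{t}_s(\theta_i, x)$. The core claim is the one-directional invariant: for every $\theta \in cl(\phi)$ and every position $x$, if $x \in Q_\theta$ then $\rho, x \models \theta$; applying it at $\theta = \phi$, $x = 0$ yields $\rho \models \phi$. I would prove it by structural induction on $\theta$. For the $\wedge$, $\vee$, $X$ and $N$ cases the argument is verbatim the corresponding case of the backward direction of Theorem~\ref{thm:msofussy}, since that reasoning only ever uses the left-to-right reading of the constraint, which is precisely what $\mathsf{t}_s$ supplies. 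For $U$ and $R$ one additionally performs a downward induction on the position from $last$ toward $0$: the forward implication unfolds $Q_{\theta_j U \theta_k}(x)$ (resp.\ $Q_{\theta_j R \theta_k}(x)$) one step, and finiteness of $\rho$ grounds the recursion at $x = last$, where the $x \neq last$ disjunct vanishes; this goes through with only the $\rightarrow$ direction because on finite traces the least and greatest fixpoints of these operators coincide.

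The crux, and the reason \NNF is required, is the negation case. Under \NNF the only negations in $\phi$ are applied to atoms, so the subformula is $\neg p$ with $p \in \P$, and $\mathsf{t}_s(\neg p, x)$ reads $Q_{\neg p}(x) \rightarrow \neg Q_p(x)$. If $x \in Q_{\neg p}$ then $x \notin Q_p$; crucially $Q_p$ is \emph{not} existentially guessed but is the fixed interpretation of $\I_\rho$, namely $Q_p = \{x : p \in \rho[x]\}$, so $x \notin Q_p$ genuinely means $\rho, x \not\models p$, i.e.\ $\rho, x \models \neg p$. The hard part is recognizing why this fails for a negation $\neg\theta_j$ with nonatomic $\theta_j$: deducing $\rho, x \models \neg\theta_j$ from $x \notin Q_{\theta_j}$ would require the \emph{converse} hypothesis $\rho, x \models \theta_j \Rightarrow x \in Q_{\theta_j}$, which the one-directional $\mathsf{t}_s$ constraint does not provide for a guessed predicate (this is exactly the asymmetry that the full $\leftrightarrow$ of the Fussy encoding would repair). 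Restricting to \NNF removes precisely these problematic cases, since negated operands are atoms whose predicate is pinned down by the trace and therefore behaves as a true biconditional. I would close by noting that the formula $\phi = \neg Fa$ discussed before the theorem is a concrete witness that the \emph{Sloppy} weakening is unsound outside \NNF.
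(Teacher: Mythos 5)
Your proposal is correct and takes essentially the same approach as the paper, which only sketches this proof by stating that it is analogous to that of Theorem~\ref{thm:msofussy} and that the $\leftrightarrow$ is needed only to handle negation, which \NNF confines to atoms. Your additional details --- the secondary induction on position (grounded by finiteness of $\rho$) for the $U$ and $R$ cases, and the observation that $Q_p$ is pinned down by the structure $\I_\rho$ rather than existentially guessed, which is what makes the one-directional constraint sound at negated atoms --- are exactly the points the paper leaves implicit.
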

The proof here is analogous to that of Theorem~\ref{thm:msofussy}. The crux here is that the $\leftrightarrow$ in $\mathsf{t}(\theta_i)$ is needed only to handle negation correctly. \emph{Sloppy} encoding, however, is applied only to \LTLf formulas in \NNF, so negation can be applied only to atomic propositions, which is handled by the base case $(\neg Q_p(x))$.

\vspace{0.2cm}
\noindent{\textbf{\emph{Variable Form}}}
In all the variations of the \MSO encoding we can get above, we introduced a monadic predicate for each non-atomic subformula in $cl(\phi)$, this is the \emph{Full} variation. We now introduce \emph{Lean} variation, a new variable form, aiming at decreasing the number of quantified monadic predicates. Fewer quantifiers on monadic predicates could benefit symbolic DFA construction a lot since quantifier elimination in $\mathsf{MONA}$ takes heavy cost. 
The key idea of \emph{Lean} variation is introducing monadic predicates only for atomic subformulas and non-atomic subformulas of the form $\phi_j U \theta_k$ or $\phi_j R \theta_k$~(named as $U$- or $R$-subformula respectively).

For non-atomic subformulas that are not $U$- or $R$- subformulas, we can construct \emph{second-order terms} using already defined monadic predicates to capture the semantics of them. Function $\mathsf{lean}(\theta_i)$ is defined to get such second-order terms. Intuitively speaking, $\mathsf{lean}(\theta_i)$ indicates the same positions where $\theta_i$ is true as $Q_{\theta_i}$ does, instead of having $Q_{\theta_i}$ explicitly. We use built-in second-order operators in $\mathsf{MONA}$ to simplify the definition of $\mathsf{lean}(\theta_i)$. $\mathsf{ALIVE}$ is defined using built-in procedures in $\mathsf{MONA}$ to collect all instances along the finite trace. $\mathsf{MONA}$ also allows to apply set union, intersection, and difference for second-order terms, as well as the $-1$ operation (which shifts a monadic predicate backwards by one position). $\mathsf{lean}(\theta_i)$ is defined over the structure of $\theta_i$ as following:
\begin{itemize}
\item If $\theta_i = (\neg\theta_j)$, then $\mathsf{lean}(\theta_i) = (\mathsf{ALIVE} \backslash \mathsf{lean}(\theta_j))$
\item If $\theta_i = (\theta_j \wedge \theta_k)$, then $\mathsf{lean}(\theta_i) = (\mathsf{lean}(\theta_j) \mbox{ inter } \mathsf{lean}(\theta_k))$
\item If $\theta_i = (\theta_j \vee \theta_k)$, then $\mathsf{lean}(\theta_i) = (\mathsf{lean}(\theta_j) \mbox{ union } \mathsf{lean}(\theta_k))$
\item If $\theta_i = (X\theta_j)$, then $\mathsf{lean}(\theta_i) = ((\mathsf{lean}(\theta_j)-1) \backslash \{last\})$
\item If $\theta_i = (N\theta_j)$, then $\mathsf{lean}(\theta_i) = ((\mathsf{lean}(\theta_j)-1) \mbox{ union } \{last\})$
\item If $\theta_i = (\theta_j U \theta_k)$ or $\theta_i = (\theta_j R \theta_k)$, then $\mathsf{lean}(\theta_i) = Q_{\theta_a}$, where $Q_{\theta_a}$ is the corresponding monadic predicate.
\end{itemize}

The following lemma ensures that $\mathsf{lean}(\theta_i)$ keeps the interpretation of each non-atomic subformula $\theta_i \in cl(\phi)$. 
% For proof, we refer to Appendix~\ref{app::mso}.

\begin{lemma}\label{lemma:lean}
Let $\phi$ be an \LTLf formula, $\rho$ be a finite trace. Then $\rho,x \models \theta_i$ iff $\mathsf{lean}(\theta_i)(x)$ holds, where $x$ is the position in $\rho$.
\end{lemma}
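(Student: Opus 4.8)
The plan is to prove Lemma~\ref{lemma:lean} by structural induction on the non-atomic subformula $\theta_i \in cl(\phi)$, mirroring the case analysis used in the definition of $\mathsf{lean}(\theta_i)$. The base cases are the $U$- and $R$-subformulas, for which $\mathsf{lean}(\theta_i) = Q_{\theta_a}$ is an explicitly quantified monadic predicate; here the claim reduces to the semantics already captured by the \MSO encoding of Section~\ref{sec:mso}, so $\rho,x \models \theta_i$ iff $x \in Q_{\theta_a}$ iff $\mathsf{lean}(\theta_i)(x)$ holds. Strictly speaking, atomic subformulas $p$ are handled by the monadic predicates $Q_p$ directly and need not be inductive cases, so the induction effectively treats the $U/R$-subformulas as leaves with respect to the $\mathsf{lean}$ recursion.

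For the inductive step I would go through each remaining Boolean and next-type connective and check that the set-theoretic operation used in the definition of $\mathsf{lean}$ exactly matches the \LTLf semantics given in Section~\ref{sec:pre}. Concretely: for $\theta_i = (\neg\theta_j)$, I use that $\mathsf{ALIVE}$ collects all positions $\{0,1,\ldots,last\}$, so $\mathsf{ALIVE}\backslash\mathsf{lean}(\theta_j)$ holds at $x$ iff $x \notin \mathsf{lean}(\theta_j)$ iff (by induction) $\rho,x\not\models\theta_j$ iff $\rho,x\models\neg\theta_j$. The $\wedge$ and $\vee$ cases follow immediately from intersection and union matching conjunction and disjunction. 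For $\theta_i = (X\theta_j)$, I verify that $(\mathsf{lean}(\theta_j)-1)\backslash\{last\}$ holds at $x$ iff $x\neq last$ and $x+1 \in \mathsf{lean}(\theta_j)$, which by induction and the semantics of $X$ (namely $x+1 < |\rho|$ and $\rho,x+1\models\theta_j$) is exactly $\rho,x\models X\theta_j$; here I must be careful that the $-1$ shift operator combined with removing $last$ correctly encodes the strict successor requirement. The dual $N$ case is analogous, using union with $\{last\}$ to capture the weak-next clause $x=last \vee \rho,x+1\models\theta_j$.

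The main obstacle I anticipate is getting the $X$ and $N$ cases precisely right, since the $-1$ operation shifts a predicate backwards by one position and the interaction with the boundary point $last$ is where off-by-one errors hide. In particular I would confirm that $\mathsf{MONA}$'s $-1$ operator on a second-order term produces exactly $\{x : x+1 \in \mathsf{lean}(\theta_j)\}$, so that membership of $x$ in $\mathsf{lean}(\theta_j)-1$ corresponds to $x+1$ satisfying $\theta_j$; then removing $\{last\}$ enforces $x \neq last$ (matching $X$), while adjoining $\{last\}$ enforces the weak-next escape clause (matching $N$). Once these shift-and-boundary identities are pinned down, each case is a routine rewriting, and the induction closes. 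It is worth noting that the lemma does not separately assert a $\vee$ semantics clause in Section~\ref{sec:pre}'s definition, but $\vee$ is a standard Boolean abbreviation, so I would simply appeal to its derived semantics.
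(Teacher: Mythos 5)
Your proposal is correct and follows essentially the same route as the paper: a structural induction on $\theta_i$ in which the Boolean and (weak-)next cases are discharged by matching the set operations ($\backslash$, inter, union, $-1$, and the $\{last\}$ adjustment) against the \LTLf semantics, and the $U$-/$R$-subformulas are handled by identifying $\mathsf{lean}(\theta_i)$ with the explicitly quantified predicate $Q_{\theta_a}$ whose intended interpretation is $\{x : \rho,x\models\theta_a\}$. The paper merely writes the two directions of the ``iff'' out separately rather than as a single chain of equivalences, so there is no substantive difference.
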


\begin{proof}
Suppose first that $\rho,x \models \theta_i$. We prove this inductively on the structure of $\theta_i$.
\begin{itemize}
\item If $\theta_i = \neg \theta_j$, then $\mathsf{lean}(\theta_i) = (\mathsf{ALIVE} \backslash \mathsf{lean}(\theta_j))$. $\mathsf{lean}(\theta_i)(x)$ holds since $\mathsf{lean}(\theta_i) = \{x~:~x \notin \mathsf{lean}(\theta_j)\}$ and $\mathsf{lean}(\theta_j) = \{x~:~\rho,x \models \theta_j\}$.

\item If $\theta_i = \theta_j \wedge \theta_k$, then $\mathsf{lean}(\theta_i) = (\mathsf{lean}(\theta_j)~\mbox{inter}~\mathsf{lean}(\theta_k))$. $\mathsf{lean}(\theta_i)(x)$ holds since $\mathsf{lean}(\theta_i) = \{x~:~x \in \mathsf{lean}(\theta_j) \mbox{ and } x \in \mathsf{lean}(\theta_k)\}$, $\mathsf{lean}(\theta_j) = \{x~:~\rho,x \models \theta_j\}$ and $\mathsf{lean}(\theta_k) = \{x~:~\rho,x \models \theta_k\}$.

\item If $\theta_i = \theta_j \vee \theta_k$, then $\mathsf{lean}(\theta_i) = (\mathsf{lean}(\theta_j)~\mbox{union}~\mathsf{lean}(\theta_k))$. $\mathsf{lean}(\theta_i)(x)$ holds since $\mathsf{lean}(\theta_i) = \{x~:~x \in \mathsf{lean}(\theta_j) \mbox{ or } x \in \mathsf{lean}(\theta_k)\}$, $\mathsf{lean}(\theta_j) = \{x~:~\rho,x \models \theta_j\}$ and $\mathsf{lean}(\theta_k) = \{x~:~\rho,x \models \theta_k\}$.

\item If $\theta_i = X\theta_j$, then $\mathsf{lean}(\theta_i) = ((\mathsf{lean}(\theta_j)-1) \backslash \{last\})$. $\mathsf{lean}(\theta_i)(x)$ holds since $\mathsf{lean}(\theta_i) = \{x~:~x \neq last \mbox{ and } x+1 \in \mathsf{lean}(\theta_j)\}$, $\mathsf{lean}(\theta_j) = \{x~:~\rho,x \models \theta_j\}$.

\item If $\theta_i = N\theta_j$, then $\mathsf{lean}(\theta_i) = ((\mathsf{lean}(\theta_j)-1)~\mbox{union}~\{last\})$. $\mathsf{lean}(\theta_i)(x)$ holds since $\mathsf{lean}(\theta_i) = \{x~:~x = last \mbox{ or } x+1 \in \mathsf{lean}(\theta_j)\}$, $\mathsf{lean}(\theta_j) = \{x~:~\rho,x \models \theta_j\}$.

\item If $\theta_i = \theta_j U \theta_k$ or $\theta_i = \theta_j R \theta_k$, then $\mathsf{lean}(\theta_i) = Q_{\theta_a}$. $\mathsf{lean}(\theta_i)(x)$ holds since $\mathsf{lean}(\theta_i)  = Q_{\theta_a} = \{x~:~\rho,x \models \theta_a\}$, where $Q_{\theta_a}$ is the corresponding second-order predicate for formula $\theta_i$.

\end{itemize}

Assume now that $\I_{\rho}^{mso} \models \mathsf{lean}(\theta_i)(x)$ with given interpretations of second-order predicates. We now prove $\rho,x \models \theta_i$ by induction over the structure on $\theta_i$.
\begin{itemize}
\item If $\theta_i = \neg \theta_j$, then $\mathsf{lean}(\theta_i) = (\mathsf{ALIVE} \backslash \mathsf{lean}(\theta_j))$. Since $\mathsf{lean}(\theta_i)(x)$ holds, we also have that $x \in \mathsf{lean}(\theta_i)$ iff $x \notin \mathsf{lean}(\theta_j)$. It follows by induction that $\rho,x \models \theta_i$.

\item If $\theta_i = \theta_j \wedge \theta_k$, then $\mathsf{lean}(\theta_i) = (\mathsf{lean}(\theta_j) ~\mbox{inter}~ \mathsf{lean}(\theta_k))$. Since $\mathsf{lean}(\theta_i)(x)$ holds, we also have that $x \in \mathsf{lean}(\theta_i)$ iff $x \in \mathsf{lean}(\theta_j)$ and $x \in \mathsf{lean}(\theta_k)$. It follows by induction that $\rho,x \models \theta_i$.

\item If $\theta_i = \theta_j \vee \theta_k$, then $\mathsf{lean}(\theta_i) = (\mathsf{lean}(\theta_j) ~\mbox{union}~ \mathsf{lean}(\theta_k))$. Since $\mathsf{lean}(\theta_i)(x)$ holds, we also have that $x \in \mathsf{lean}(\theta_i)$ iff $x \in \mathsf{lean}(\theta_j)$ or $x \in \mathsf{lean}(\theta_k)$. It follows by induction that $\rho,x \models \theta_i$.

\item If $\theta_i = X\theta_j$, then $\mathsf{lean}(\theta_i) = ((\mathsf{lean}(\theta_j)-1) \backslash \{last\})$. Since $\mathsf{lean}(\theta_i)(x)$ holds, we also have that $x \neq last \mbox{ and } x+1 \in \mathsf{lean}(\theta_j)$. It follows by induction that $\rho,x \models \theta_i$.

\item If $\theta_i = N\theta_j$, then $\mathsf{lean}(\theta_i) = ((\mathsf{lean}(\theta_j)-1) ~\mbox{union}~ \{last\})$. Since $\mathsf{lean}(\theta_i)(x)$ holds, we also have that $x = last \mbox{ or } x+1 \in \mathsf{lean}(\theta_j)$. It follows by induction that $\rho,x \models \theta_i$.

\item If $\theta_i = \theta_j U \theta_k$ or $\theta_i = \theta_j R \theta_k$, then $\mathsf{lean}(\theta_i) = Q_{\theta_a}$, where $Q_{\theta_a}$ is the corresponding second-order predicate. It follows by induction that $\rho,x \models \theta_i$.
\end{itemize}

\end{proof}

Finally, we define \emph{Lean} variation based on function $\mathsf{lean}(\phi)$. \emph{Lean} variation $\mathsf{mso}_\lambda(\phi)$ returns \MSO formula $(\exists Q_{\theta_1})\ldots(\exists Q_{\theta_n})$ 
 $(\mathsf{lean}(\phi)(0) \wedge ((\forall x)(\bigwedge_{a=1}^{n}\mathsf{t}_\lambda(\theta_a,x))))$, where $n$ is the number of $U$- and $R$- subformulas $\theta_a \in cl(\phi)$, and $\mathsf{t}_\lambda(\theta_a,x)$ is defined as follows: if $\theta_a = (\theta_j U \theta_k)$, then $\mathsf{t}_\lambda(\theta_a,x) = (Q_{\theta_a}(x) \leftrightarrow (\mathsf{lean}(\theta_k)(x) \vee ((x \neq last) \wedge \mathsf{lean}(\theta_j)(x) \wedge Q_{\theta_a}(x+1))))$;
if $\theta_a = (\theta_j R \theta_k$), then $\mathsf{t}_\lambda(\theta_a,x) = (Q_{\theta_a}(x) \leftrightarrow (\mathsf{lean}(\theta_k)(x) \wedge ((x = last) \vee \mathsf{lean}(\theta_j)(x) \vee Q_{\theta_a}(x+1))))$.
The following theorem guarantees the correctness of \textit{Lean} variation. 
% For proof, we refer to Appendix~\ref{app::mso}.
\begin{theorem}\label{thm:lean}
Let $\phi$ be an \LTLf formula, $\rho$ be a finite trace. Then $\rho \models \phi$ iff $\mathcal{I}_{\rho} \models \mathsf{mso}_\lambda(\phi)$.
\end{theorem}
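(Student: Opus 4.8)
The plan is to reduce the correctness of the \emph{Lean} variation to the correctness of the \emph{Fussy} \MSO encoding already established in Theorem~\ref{thm:msofussy}, using Lemma~\ref{lemma:lean} as the bridge. The key structural observation is that $\mathsf{mso}_\lambda(\phi)$ differs from $\mathsf{mso}(\phi)$ only in that monadic predicates $Q_{\theta_i}$ for non-$U$/$R$-subformulas are replaced by the second-order terms $\mathsf{lean}(\theta_i)$, while predicates $Q_{\theta_a}$ for $U$- and $R$-subformulas are retained and still constrained by the $\leftrightarrow$ clauses $\mathsf{t}_\lambda(\theta_a,x)$. Lemma~\ref{lemma:lean} already guarantees that for \emph{any} subformula $\theta_i$, the term $\mathsf{lean}(\theta_i)$ picks out exactly the positions where $\theta_i$ holds, i.e. $\mathsf{lean}(\theta_i)(x)$ iff $\rho,x\models\theta_i$. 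So the main task is to show that these substituted terms interact correctly with the remaining explicit predicates.

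I would structure the proof in the two standard directions, mirroring Theorem~\ref{thm:msofussy}. For the forward direction, assume $\rho\models\phi$, and expand $\I_\rho$ by interpreting each retained predicate $Q_{\theta_a}$ (for $U$- and $R$-subformulas $\theta_a$) as $\{x : \rho,x\models\theta_a\}$; call this $\I_\rho^{mso}$. I must verify two things: first, that $\mathsf{lean}(\phi)(0)$ holds, which is immediate from Lemma~\ref{lemma:lean} applied at $x=0$ together with $\rho,0\models\phi$; and second, that each constraint $\mathsf{t}_\lambda(\theta_a,x)$ holds universally. The latter follows because the right-hand sides of $\mathsf{t}_\lambda$ are exactly the \LTLf one-step unfoldings of Until and Release, where the subterms $\mathsf{lean}(\theta_j),\mathsf{lean}(\theta_k)$ denote the true position sets (by Lemma~\ref{lemma:lean}) and $Q_{\theta_a}$ is interpreted as the true position set of $\theta_a$ itself; thus $\mathsf{t}_\lambda(\theta_a,x)$ reduces to the recursive semantic identity for $\theta_a$, which holds by the semantics of $U$ and $R$.

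For the converse, assume $\I_\rho\models\mathsf{mso}_\lambda(\phi)$, so there is an expansion fixing the $Q_{\theta_a}$ for $U$/$R$-subformulas such that $\mathsf{lean}(\phi)(0)$ and $\forall x\,\bigwedge_a\mathsf{t}_\lambda(\theta_a,x)$ hold. By Lemma~\ref{lemma:lean}, $\mathsf{lean}(\phi)(0)$ immediately gives $\rho,0\models\phi$, which is the desired conclusion. The subtlety is that Lemma~\ref{lemma:lean} is stated for arbitrary interpretations of the $U$/$R$-predicates, yet its proof tacitly relies on those predicates already denoting the correct sets; so I expect to need an intermediate claim that the constraints $\mathsf{t}_\lambda(\theta_a,x)$ \emph{force} each $Q_{\theta_a}$ to coincide with $\{x:\rho,x\models\theta_a\}$. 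This is exactly the fixpoint argument underlying the original theorem: the $\leftrightarrow$ unfolding for $U$ is a least-fixpoint characterization (pinned down by the finite horizon, with $x=last$ as base case), and for $R$ a greatest-fixpoint one, so over a finite trace these equivalences have a unique solution equal to the true set. I would invoke the structural induction of Theorem~\ref{thm:msofussy} in nested form, noting that the only genuinely new cases are $U$ and $R$, and those cases are verbatim the Until/Release cases of that theorem with $Q_{\theta_j},Q_{\theta_k}$ replaced by $\mathsf{lean}(\theta_j),\mathsf{lean}(\theta_k)$.

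The main obstacle is this uniqueness-of-fixpoint point in the converse direction: Lemma~\ref{lemma:lean} as stated hides the assumption that the $Q_{\theta_a}$ are correctly interpreted, so one cannot simply cite it to close the argument without first arguing that the $\mathsf{t}_\lambda$ constraints determine the $Q_{\theta_a}$ uniquely (as true-position sets) on finite traces. I would handle this by an induction from the last position backward for the $U$-case and forward/base-case reasoning for the $R$-case, paralleling the treatment in Theorem~\ref{thm:msofussy}, after which Lemma~\ref{lemma:lean} applies cleanly to every subterm. The remaining bookkeeping — that $\mathsf{lean}$ correctly composes through Boolean and $X,N$ operators — is entirely delegated to Lemma~\ref{lemma:lean} and requires no fresh work.
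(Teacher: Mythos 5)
Your proposal is correct and follows essentially the same route as the paper's proof: both directions are handled by expanding $\I_\rho$ with the true position sets for the $U$/$R$-predicates, delegating all non-$U$/$R$ cases to Lemma~\ref{lemma:lean}, and unfolding $\mathsf{t}_\lambda$ for the $U$ and $R$ cases exactly as in Theorem~\ref{thm:msofussy}. Your explicit repair of the converse direction---arguing that the $\leftrightarrow$ constraints force each $Q_{\theta_a}$ to coincide with the true position set via backward induction from $last$, rather than citing Lemma~\ref{lemma:lean} whose $U$/$R$ cases tacitly presuppose this---addresses a step the paper's proof glosses over with ``by induction hypothesis,'' so your version is, if anything, more careful on exactly the point that needs it.
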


\begin{proof}
If $\phi$ is a propositional atom $p$, then $\mathsf{mso}_\lambda(\phi)=Q_p(0)$. It is true that $\rho \models \phi$ iff $\I_\rho \models \mathsf{mso}_\lambda(\phi)$. If $\phi$ is an nonatomic formula, we prove this theorem in two directions.

Suppose first that $\rho$ satisfies $\phi$. We expand the monadic structure $\I_\rho$ with interpretations for $Q_{\theta_1}, Q_{\theta_2},\ldots,Q_{\theta_n}$ by setting $Q_{\theta_a} = \{x~:~\rho,x \models \theta_a\}$. Let the expanded structure be $\I_\rho^{mso}$. By assumption, $\mathsf{lean}(\phi)(0)$ holds in $\I_\rho^{mso}$.
It remains to prove that $\I_\rho^{mso}\models(\forall x)(\bigwedge_{a=1}^{n}\mathsf{t}_\lambda(\theta_a,x))$, for each $U$ or $R$ subformula $\theta_a\in cl(\phi)$.
\begin{itemize}

\item If $\theta_a = (\theta_j U \theta_k)$, then $\mathsf{t}_\lambda(\theta_a,x) = (\mathsf{lean}(\theta_a)(x) \leftrightarrow (\mathsf{lean}(\theta_k)(x) \vee ((x \neq last) \wedge \mathsf{lean}(\theta_j)(x) \wedge \mathsf{lean}(\theta_a)(x+1))))$. This holds, since $\mathsf{lean}((\theta_j U \theta_k)) = \{x~:~\rho,x \models \theta_j U \theta_k\} = \{x~:~\rho,x \models \theta_k \mbox{ or } x \neq last \mbox{ with } \rho,x \models \theta_j \mbox{ also } \rho,x+1 \models \theta_a \}$, $\mathsf{lean}(\theta_j) = \{x~:~\rho,x \models \theta_j\}$, and $\mathsf{lean}(\theta_k) = \{x~:~\rho,x \models \theta_k\}$ with Lemma~\ref{lemma:lean};

\item If $\theta_a = (\theta_j R \theta_k)$, then $\mathsf{t}_\lambda(\theta_a,x) = (\mathsf{lean}(\theta_a)(x) \leftrightarrow (\mathsf{lean}(\theta_k)(x) \wedge ((x = last) \vee \mathsf{lean}(\theta_j)(x) \vee \mathsf{lean}(\theta_a)(x+1))))$. This holds, since $\mathsf{lean}((\theta_j R \theta_k)) = \{x~:~\rho,x \models \theta_j R \theta_k\} = \{x~:~\rho,x \models \theta_k \mbox{ with } x = last \mbox{ or } \rho,x \models \theta_j \mbox{ or } \rho,x+1 \models \theta_a \}$, $\mathsf{lean}(\theta_j) = \{x~:~\rho,x \models \theta_j\}$, and $\mathsf{lean}(\theta_k) = \{x~:~\rho,x \models \theta_k\}$ with Lemma~\ref{lemma:lean}.
\end{itemize}
%With above, it has been proven that $\mathcal{I}_{\rho}^{mso} \models mso(\phi)$; 

Assume now that $\I_\rho\models \mathsf{mso}_\lambda(\phi)$. This means that there is an expansion of $\I_\rho$ with monadic interpretations $Q_{\theta_a}$ for each element $\theta_a$ of U or R subformulas in $cl(\phi)$ such that this expanded structure $\I_\rho^{mso}\models (\mathsf{lean}(\phi)(0))\wedge ((\forall x)(\bigwedge_{a=1}^{n} \mathsf{t}_\lambda(\theta_a,x)))$. If $\phi$ is not an $R$ or $U$ subformula, then it has been proven by Lemma~\ref{lemma:lean} that if $x \in \mathsf{lean}(\phi)$, then $\rho,x\models\phi$. We now prove by induction on $\phi$ that if $x \in Q_{\theta_a}$, then $\rho,x\models\phi$. Since $\I_\rho^{mso}\models (\mathsf{lean}(\phi)(0))$, it follows that $\rho,0\models\phi$. 
\begin{itemize}
\item
If $\phi= (\theta_j U \theta_k$), then $\mathsf{t}_\lambda(\phi,x) = (\mathsf{lean}(\phi)(x) \leftrightarrow (\mathsf{lean}(\theta_k)(x) \vee ((x \neq last) \wedge \mathsf{lean}(\theta_j)(x) \wedge \mathsf{lean}(\phi)(x+1))))$. Since $\mathsf{t}_\lambda(\phi)$ holds at every point $x$ of $\I_\rho^{mso}$, it follows that $x\in \mathsf{lean}(\phi)$ iff $x \in \mathsf{lean}(\theta_k)$ or $x \neq last \mbox{ with } x \in \mathsf{lean}(\theta_j) \mbox{ also }x+1\in \mathsf{lean}(\phi)$. Moreover, $\mathsf{lean}(\phi) = Q_{\theta_a}$, where $Q_{\theta_a}$ is the corresponding second-order predicate. Thus, by induction hypothesis $\rho,x\models\phi$.
\item
If $\phi= (\theta_j R \theta_k$), then $\mathsf{t}_\lambda(\phi,x) = (\mathsf{lean}(\phi)(x) \leftrightarrow (\mathsf{lean}(\theta_k)(x) \wedge ((x = last) \vee \mathsf{lean}(\theta_j)(x) \vee \mathsf{lean}(\phi)(x+1))))$. Since $\mathsf{t}_\lambda(\phi)$ holds at every point $x$ of $\I_\rho^{mso}$, it follows that $x\in \mathsf{lean}(\phi)$ iff $x \in \mathsf{lean}(\theta_k)$ with $x = last \mbox{ or } x \in \mathsf{lean}(\theta_j) \mbox{ or }x+1\in \mathsf{lean}(\phi)$. Moreover, $\mathsf{lean}(\phi) = Q_{\theta_a}$, where $Q_{\theta_a}$ is the corresponding second-order predicate.  Thus, by induction hypothesis $\rho,x\models\phi$.

\end{itemize}
\end{proof}

Having defined different variations of the \MSO encoding, we now provide variations of the Compact \MSO encoding described in Section~\ref{sec:omso}. 

\vspace{0.2cm}
\noindent\textbf{\emph{Sloppy Formulation}}
The formulation described in Section~\ref{sec:omso} strictly tracks the computation over each BDD $B_q$, which we refer to \emph{Fussy} formulation. That is, for each nonterminal node $\alpha$, both the forward computation and previous computation must be tracked. This causes a high logical complexity in the formulation. An alteration to diminish the logical complexity is to utilize a \emph{Sloppy Formulation}, analogous to the \emph{Sloppy} variation described above, that only tracks the forward computation. Since the previous computations are not tracked, none of the computations leading to terminal node $0$ of the BDD $B_f'$ enable an accepting condition. 

To define the accepting condition of \emph{Sloppy Formulation}, we have
\begin{align*}
\mathsf{Racc_s} = \left(\bigvee_{(\beta,v,d) \in \mathsf{PreT}(B_f',0)}[x \in N_\beta \wedge x \in^d v] \right) \rightarrow (x \neq 0).
\end{align*}
Moreover, $\mathsf{node_s}$ only requires $\mathsf{PostCon}$ of $\mathsf{node}$. Therefore, we have
\begin{align*}
\mathsf{node_s} = \bigwedge_{1\leq \alpha\leq u} \mathsf{PostCon}.
\end{align*}

The second-order formula $\mathsf{Rev_s}(\F_\psi)$ of \emph{Sloppy Formulation} is defined as following:\\
\centerline{
$\mathsf{Rev_s}(\F_\psi) = (\exists V_0) \ldots (\exists V_{k-1})(\exists N_1) \ldots (\exists N_u)(\forall x)(\mathsf{Rinit} \wedge \mathsf{node_s}$}
\centerline{
$\wedge \mathsf{Rterminal} \wedge \mathsf{Racc_s} \wedge \mathsf{roots})$,}
where $\mathsf{Rinit}, \mathsf{Rterminal}$ and $\mathsf{roots}$ are defined as in Section~\ref{sec:omso}. Therefore, let $\mathsf{Cmso_s}(\phi)$ be the \emph{Sloppy Formulation} of the Compact \MSO encoding, we define $\mathsf{Cmso_s}(\phi) = \mathsf{Rev_s}(\F_\psi)$ asserting the truth of $\phi$ at position 0, where $\psi$ is the corresponding \PLTLf formula of $\phi$, and $\F_\psi$ is the symbolic DFA of $\psi$. The following theorem asserts the correctness of the \emph{Sloppy Formulation}.

\begin{theorem}\label{thm:cmsosloppy}
The models of formula $\mathsf{Cmso_s}(\phi)$ are exactly the words satisfying $\phi$.
\end{theorem}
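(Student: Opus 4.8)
The plan is to reduce the statement to the single language identity $\L(\mathsf{Rev_s}(\F_\psi)) = \L^R(\F_\psi)$. Once this is in hand, the chain $\L(\phi) = \L^R(\psi) = \L^R(\F_\psi)$ (using Theorem~\ref{thm:pltlf} and the fact that $\F_\psi$ recognizes exactly the models of $\psi$, where $\psi=\phi^R$), together with $\mathsf{Cmso_s}(\phi) = \mathsf{Rev_s}(\F_\psi)$, gives $\L(\phi) = \L(\mathsf{Cmso_s}(\phi))$ precisely as in the \emph{Fussy} case. Since $\mathsf{Rinit}$, $\mathsf{Rterminal}$ and $\mathsf{roots}$ are identical to their counterparts in $\mathsf{Rev}(\F_\psi)$, the whole argument comes down to justifying the two modifications: replacing $\mathsf{node}$ by $\mathsf{node_s}$ (which drops $\mathsf{PreCon}$) and $\mathsf{Racc}$ by $\mathsf{Racc_s}$.

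The observation I would isolate first is that $\mathsf{roots}$ and $\mathsf{PostCon}$ already force the \emph{true} forward BDD computation to be active at every position. Indeed $\mathsf{roots}$ puts $\mathsf{root}(B_q)$ into the active set at every $x$, and $\mathsf{PostCon}$ then propagates activity along the single edge selected by the determined value of each node's variable; by induction along each $B_q$ the entire root-to-terminal path traversed at position $x$ is active. Dropping $\mathsf{PreCon}$ only permits the sets $N_\alpha$ to contain additional \emph{spurious} nodes unreachable from any root; it can never delete a node on the genuine path. This one-sided guarantee is exactly what makes \emph{Sloppy} work, and it is why the acceptance condition has to be rephrased via the rejecting terminal~$0$ of $B_f'$ rather than the accepting terminal~$1$.

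For $\L^R(\F_\psi) \subseteq \L(\mathsf{Rev_s}(\F_\psi))$ I would exhibit the \emph{clean} interpretation. Given $\rho$ with $\rho^R \in \L(\F_\psi)$, take the unique accepting run of $\F_\psi$ on $\rho^R$, read it backwards to label the positions of $\rho$ with states, set each $V_q$ to the positions where $x_q$ holds in this labeling, and set each $N_\alpha$ to exactly the nodes visited by the true computation at each position. Then $\mathsf{Rinit}$ holds because the run begins in $X_0$, which the reversal places at position $last$; $\mathsf{roots}$ and $\mathsf{PostCon}$ hold by construction of the path; and $\mathsf{Rterminal}$ holds because the terminal reached in $B_q$ at position $x$ is exactly the $q$-th state bit supplied to position $x-1$. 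Finally $\mathsf{Racc_s}$ holds: as the run is accepting, the $B_f'$ computation at position~$0$ reaches terminal~$1$, so under the clean interpretation no active node at $x=0$ has an edge into terminal~$0$ and the antecedent of $\mathsf{Racc_s}$ is false there.

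For the converse $\L(\mathsf{Rev_s}(\F_\psi)) \subseteq \L^R(\F_\psi)$, fix any interpretation of the $V_q$ and $N_\alpha$ satisfying $\mathsf{Rev_s}(\F_\psi)$. By the observation above the true path is active everywhere, so the true-path instances of $\mathsf{Rterminal}$ force the $V_q$ to encode a transition-consistent reversed state sequence while $\mathsf{Rinit}$ pins its value at $last$ to $X_0$; thus the $V_q$ describe a legitimate run of $\F_\psi$ on $\rho^R$ (any additional constraints from spurious nodes must be consistent, since the interpretation satisfies the formula). To see the run is accepting, evaluate $\mathsf{Racc_s}$ at $x=0$: its consequent $x \neq 0$ is false, hence its antecedent is false, so no active node at position~$0$ edges into terminal~$0$ of $B_f'$; since the true-path node there \emph{is} active, the genuine $B_f'$ computation cannot reach terminal~$0$ and therefore reaches terminal~$1$, so $\F_\psi$ accepts $\rho^R$. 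The main obstacle throughout is precisely this asymmetry introduced by discarding $\mathsf{PreCon}$: soundness must be argued from only the lower bound ``the true path is active'' on the now under-determined sets $N_\alpha$, and one must check that the rephrased $\mathsf{Racc_s}$ --- forbidding any route to the rejecting terminal at position~$0$ --- is the correct surrogate for the \emph{Fussy} $\mathsf{Racc}$ under this weaker guarantee.
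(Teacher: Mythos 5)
Your proposal is correct and follows essentially the same route as the paper, which reduces the claim to $\L(\mathsf{Rev_s}(\F_\psi)) = \L^R(\F_\psi)$ and identifies the same crux: the sets $N_\alpha$ now only over-approximate the true root-to-terminal computation (a ``sequence of sets of nodes''), and soundness is recovered because $\mathsf{Racc_s}$ forbids any active route into the rejecting terminal of $B_f'$ at position $0$. The paper leaves this as a one-sentence remark (``analogous to the Fussy Formulation''), whereas you carry out the two inclusions explicitly --- in particular the downward induction showing $\mathsf{roots}$ and $\mathsf{PostCon}$ keep the genuine path active and $\mathsf{Rinit}$/$\mathsf{Rterminal}$ pin down the $V_q$ --- which is a faithful and more complete rendering of the intended argument.
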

The proof here is analogous to that of the \emph{Fussy Formulation}, where the crux is that we define the computation trace on a BDD as a sequence of sets of BDD nodes, instead of just a specific sequence of BDD nodes, see the definition of $\mathsf{node_s}$. Such definition still keeps unambiguous formulation of the symbolic DFA since we have stronger constraints on the accepting condition, as shown in the definition of $\mathsf{Racc_s}$.

\subsection{Experimental Results}
Having presented different optimizations, we now have 6 variations of the \MSO encoding corresponding to specific optimizations, which are \BNF-\emph{Fussy-Full}, \BNF-\emph{Fussy-Lean}, \NNF-\emph{Fussy-Full}, \NNF-\emph{Fussy-Full}, \NNF-\emph{Sloppy-Full} and \NNF-\emph{Sloppy-Lean}. Moreover, we have two variations of the Compact \MSO encoding, which are \emph{Fussy} and \emph{Sloppy}. 
The experiments were divided into two parts and resulted in two major findings. First we explored the benefits of the various optimizations of \MSO encoding and showed that the most effective one is that of \emph{Lean}. Second, we aimed to answer the question whether second-order outperforms first-order in the context of \LTLf-to-automata translation. To do so, we compared the best performing \MSO encoding and Compact \MSO encoding against the \FOL encoding and showed the superiority of first-order.

% \vspace{0.1cm}
\noindent\textbf{\emph{Correctness}}
The correctness of the implementation of different encodings was evaluated by comparing the DFAs in terms of the number of states and transitions generated from each encoding. No inconsistencies were discovered.

\begin{figure}[t]
  \centering
  % Requires \usepackage{graphicx}
  \includegraphics[width=0.65\linewidth]{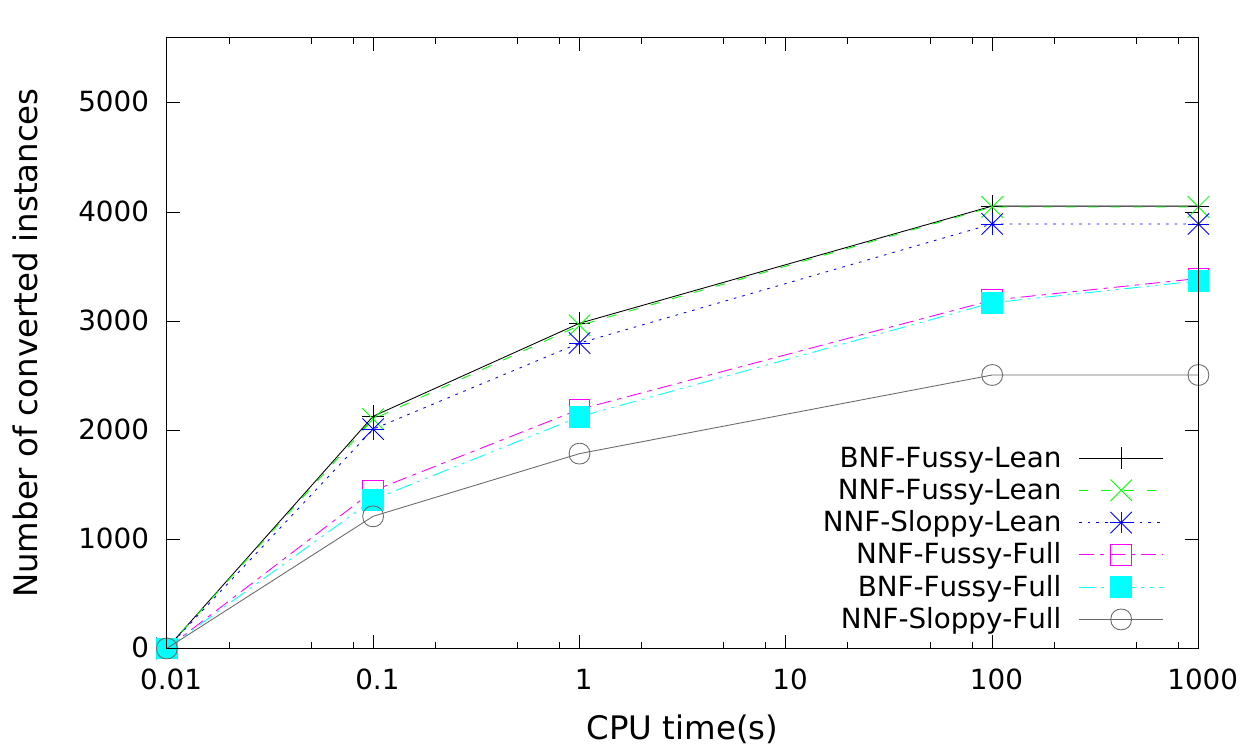}\\
  \caption[small]{Comparison over 6 variations of \MSO encoding}\label{fig:fullvslean}
  
\end{figure}

\begin{figure}[t]
  \centering
  % Requires \usepackage{graphicx}
  \includegraphics[width=0.65\linewidth]{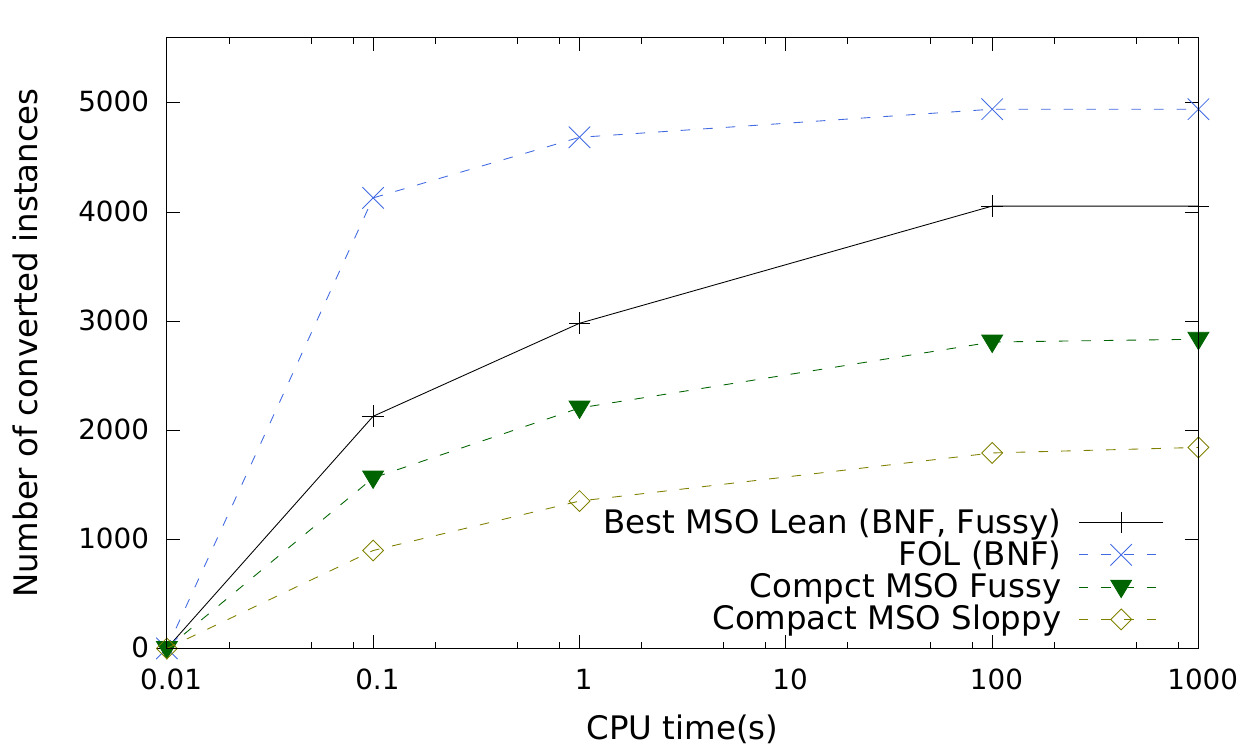}\\
  \caption[small]{Overall comparison of \FOL, \MSO and Compact \MSO encodings}\label{fig:final}

\end{figure}

% \vspace{0.1cm}
\noindent\textbf{\emph{Lean constraint form is more effective in \MSO encodings.}}
Figure~\ref{fig:fullvslean} presents the number of converted instances of each variation of \MSO encoding, where the upper three are all for \emph{Lean} variations and the lower ones are for \emph{Full} variations. The choice of \BNF vs \NNF did not have a major impact, and neither did the choice of \emph{Fussy} vs \emph{Sloppy}.
The one optimization that was particularly effective was that of \emph{Lean} variation.  The best-performing \MSO encoding was \BNF-\emph{Fussy}-\emph{Lean}. While in the Compact \MSO encoding, the \emph{Fussy} variation highly outperforms that of \emph{Sloppy}, as shown in Figure~\ref{fig:final}.

% \vspace{0.1cm}
\noindent\textbf{\emph{First-order logic dominates second-order logic for \LTLf-to-automata translation.}}
As presented in Figure~\ref{fig:final}, \FOL encoding shows its superiority over second-order encodings performance-wise, which are \MSO encoding and Compact \MSO encoding. Thus, the use of second-order logic, even under sophisticated optimization, did not prove its value in terms of performance. This suggests that nevertheless second-order encoding indicates a much simpler quantificational structure which theoretically leads to more potential space to optimize, it would be useful to have first-order as a better way in the context of \LTLf-to-automata translation in practice. 

% The motivation for the Compact \MSO encodings was to reduce the number of second-order variables in the encoding, with the intuition that the symbolic DFA generated by $\mathsf{MONA}$ for the reverse of the input \LTLf formula is highly compact. Recall, however, that the Compact \MSO encoding uses two types of second order variables: $V_q$ for state variables and $N_\alpha$ for nonterminal BDD nodes of the symbolic DFA. It turns out that, in general, the number of the $V_q$ variables in the Compact \MSO encoding is indeed significantly smaller than the number of second-order variables in the Best \MSO encoding. Nevertheless, because we have to introduce a large number of $N_\alpha$ variables, the total number of second-order variables in the Compact \MSO encoding is quite larger than the number of such variables in the Best \MSO encoding. We recall that we saw earlier that the Best \MSO encoding was dominated by the \FOL encoding. Thus, the use of second-order logic, even under sophisticated optimization, did not prove its value in terms of performance. We return to this point in the Concluding Remarks.

\section{Concluding Remarks}\label{sec:con}
In this paper, we revisited the translation from \LTLf to automata and presented new second-order encodings, \MSO encoding and Compact \MSO encoding with various optimizations. Instead of the syntax-driven translation in \FOL encoding, \MSO encoding provides a semantics-driven translation. Moreover, \MSO encoding allows a significantly simpler quantificational structure, which requires only a block of existential second-order quantifiers, followed by a single universal first-order quantifier, while \FOL encoding involves an arbitrary alternation of quantifiers. The Compact \MSO encoding simplifies further the syntax of the encoding, by introducing more second-order variables. Nevertheless, empirical evaluation showed that first-order encoding, in general, outperforms the second-order encodings. This finding suggests first-order encoding as a better way for \LTLf-to-automata translation.

To obtain a better understanding of the performance of second-order encoding of \LTLf, we looked more into \MONA. An interesting observation is that $\mathsf{MONA}$ is an ``aggressive minimizer'': after each quantifier elimination, \MONA re-minimizes the DFA under construction. Thus, the fact that the second-order encoding starts with a block of existential second-order quantifiers offers no computational advantage, as $\mathsf{MONA}$ eliminates the second-order quantifiers one by one, performing computationally heavy minimization after each quantifier. Therefore, a possible improvement to \MONA would enable it to eliminate a whole \emph{block} of quantifiers of the same type (existential or universal) in one operation, involving only one minimization. Currently, the quantifier-elimination strategy of one quantifier at a time is deeply hardwired in \MONA, so the suggested improvement would require a major rewrite of the tool. We conjecture that, with such an extension of \MONA, the second-order encodings would have a better performance, but this is left to future work.

Beyond the unrealized possibility of performance gained via second-order encodings, another motivation for studying such encodings is their greater expressivity. The fact that \LTLf is equivalent to \FOL~\cite{Kam68} shows limited expressiveness of \LTLf. For this reason it is advocated in~\cite{DV13} to use \emph{Linear Dynamic Logic}~(\LDLf) to specify ongoing behavior. \LDLf is expressively equivalent to \MSO, which is more expressive than \FOL. Thus, automata-theoretic reasoning for \LDLf, for example, reactive synthesis~\cite{DegVa15}, cannot be done via first-order encoding and requires second-order encoding. Similarly, synthesis of \LTLf with incomplete information requires the usage of second-order encoding \cite{GiacomoV16}. We leave this too to future research.

%Therefore, \LDLf is strictly more expressive than \LTLf, such that \FOL, but the same as \MSO. Program synthesis of \LDLf also relays on DFA games~\cite{DegVa15}, in which case DFA construction of \LDLf plays an important role in this process. \MSO encodings presented in this paper can be used for \LDLf-to-automata as well, in a further way, \LDLf synthesis. This could be an interesting direction.

%We have shown here new encodings of \LTLf to compare between \FOL encoding and \MSO encoding for the translation of \LTLf-to-automata. Experiments show that \FOL encoding outperforms \MSO encoding when feeding the resulted formula to $\mathsf{MONA}$ to generate DFA. We also provided here a potential improved implementation methodology of $\mathsf{MONA}$ to better serve \MSO encoding of \LTLf-to-automata translation, in which case \MSO encoding is able to outperform \FOL encoding. 

\vspace{0.1cm}
\noindent\textbf{Acknowledgments.}
Work supported in part by China HGJ Project No.~2017ZX01038102-002, NSFC Projects No.~61572197, No.~61632005 and No.~61532019, NSF grants IIS-1527668, IIS-1830549, and by NSF Expeditions in Computing project ``ExCAPE: Expeditions in Computer Augmented Program Engineering". Special thanks to Jeffrey M. Dudek and Dror Fried for useful discussions.

\bibliographystyle{splncs04}
\bibliography{cav}

\end{document}